\def\@seccntformat#1{\csname the#1\endcsname.\ } 
\theoremstyle{plain}
\newtheorem{theorem}{Theorem}[section]
\newtheorem{lemma}[theorem]{Lemma}
\newtheorem{proposition}[theorem]{Proposition}
\newtheorem{corollary}[theorem]{Corollary}
\theoremstyle{definition}
\newtheorem{remark}[theorem]{Remark}
\newtheorem{definition}[theorem]{Definition}
\newtheorem{construction}[theorem]{Construction}
\newcommand{\Ima}{\operatorname{Im}}
\newcommand{\rank}{\operatorname{rank}}
\newcommand{\Norm}{\operatorname{Norm}}
\newcommand{\Tr}{\operatorname{Trace}}
\newcommand{\F}{{\mathbb F}}
\newcommand{\Z}{{\mathbb Z}}
\def\Aut{\mathrm{Aut}}
\def\GL{\mathrm{GL}}
\def\FF{\mathbb{F}}
\def\GF#1{\FF_{#1}}
\newcommand{\be}{\begin{eqnarray}}
\newcommand{\ee}{\end{eqnarray}}
\newcommand{\nn}{{\nonumber}}
\newcommand{\Keywords}[1]{\par\noindent
{\small{\it Keywords\/}: #1}}
\title{Constructing MRD codes by switching%
\thanks{
  The work of M.\,J.\,Shi is supported by the National Natural Science
Foundation of China (12071001);
  the work of
D.\,S.\,Krotov is supported with
the state contract of the
Sobolev Institute of Mathematics (FWNF-2022-0017).
} 
} 
\author{Minjia Shi%
\thanks{Minjia Shi is with Key Laboratory of Intelligent Computing Signal Processing, Ministry of Education,
School of Mathematical Sciences, Anhui University, Hefei, Anhui, 230601, China, e-mail: smjwcl.good@163.com}%
,
 Denis S. Krotov%
 \thanks{Denis Krotov is with  Sobolev Institute of Mathematics, Novosibirsk 630090,
Russia. E-mail: krotov@math.nsc.ru}%
,
 Ferruh \"{O}zbudak%
\thanks{Ferruh \"{O}zbudak is with Department of Mathematics and Institute of Applied Mathematics, Middle East Technical
        University,   Ankara, Turkey;
        e-mail: ozbudak@metu.edu.tr}%
 } 
\date{}
\begin{document}\maketitle

\abstract
MRD codes are maximum codes in the rank-distance metric space on $m$-by-$n$ 
matrices over the finite field of order $q$. They are diameter perfect and have 
the cardinality $q^{m(n-d+1)}$ if $m\ge n$. We define switching in MRD codes as 
replacing special MRD subcodes by other subcodes with the same parameters. We 
consider constructions of MRD codes admitting such switching, including 
punctured twisted Gabidulin codes and direct-product codes. Using switching, we 
construct a huge class of MRD codes whose cardinality grows doubly exponentially 
in $m$ if the other parameters ($n$, $q$, the code distance) are fixed. 
Moreover, we construct MRD codes with different affine ranks and aperiodic MRD 
codes. 
\smallskip
\Keywords{MRD codes, rank distance, bilinear forms graph, switching, diameter perfect codes}\\
{\em AMS(2020)} 94B25

\endabstract

\maketitle

\section{Introduction}

Maximum rank-distance, or MRD, codes are a rank-metric analogue of MDS 
(i.e., attaining the Singleton bound) codes in the Hamming metric,
or index-$1$ orthogonal arrays.
Unlike
MDS codes,
MRD codes are shown to exist 
for all code distances and all parameters of the rank-metric space.
The first such codes were constructed by Delsarte~\cite{Delsarte:78:bilinear} 
and later by Gabidulin~\cite{Gabidulin:85:rank},
who further developed the theory of rank-metric codes,
see \cite{Gabidulin:2021}.
Much of the literature refers to that first class of MRD codes as
Gabidulin codes.

In~\cite{CKWW:2016}, de~la~Cruz et~al. studied the algebraic structure of linear MRD codes,
explained the automorphism group and asked
whether one can construct an MRD code that is not equivalent to a Gabidulin code.
In ALCOMA'2015 conference, two groups solved this problem independently.
Otal--{\"O}zbudak \cite{OtOzb:2016} solved this problem for some specific parameters.
Shekey \cite{Sheekey:2016:MRD} solved the problem for a larger class of parameters. Both solutions have similar ideas.
In~\cite{OtOzb:2017}, these results were extended to additive MRD codes.
Constructions of non-additive MRD codes   can be found in \cite{CMP:2016}, \cite{DuSi:2017}, \cite{OtOzb:2018}.
Actually, there are other constructions for linear MRD codes;
the most recent survey seems Sheekey~\cite{Sheekey:2019:MRD}.


Almost all known constructions are for $n$-by-$n$ MRD
(and mainly for linear) codes.
For $m$-by-$n$ MRD codes for $m \neq n$,
the main technique is a kind of puncturing;
a nice exception is the interleaving construction,
see e.g.~\cite{SLK:2020:Interleaved}.
For nonlinear MRD codes, there are only a few sporadic constructions
(like in~\cite{OtOzb:2018}).
We should also mention a very special case,
the $m$-by-$2$ MRD codes over $\F_2$,
which are equivalent to the transversals
in the Cayley table of the group $\Z_2^m$
(for a survey on transversals of Latin squares, see~\cite{Wanless:survey2011}).
The number of such objects
is rather well studied~\cite[Theorem~7.2]{Eberhard:more}
and applied in other combinatorial problems~\cite{PTT:bent}.

In the current paper, we construct non-linear MRD codes using switching.
Switching technique for codes and designs is well known and based on changing a part of the code (design)
in such a way that the parameters do nor change. Some special cases of switching, see e.g.~\cite{Ost:2012:switching},
are constructive and are used effectively in both computational and theoretical studies.
However, the general switching approach does not specify how to find changeable parts of codes
and needs to be concreted for each class of codes, which can be done in different ways
and result in a variety of techniques.
We use the switching technique close to that for
MDS codes studied in~\cite{Potapov:SQS-MDS}.
The key of the approach is the possibility
to switch MRD subcodes of a given MRD code,
if such subcodes exist.
We study known constructions of MRD codes for the existence
of such subcodes and propose variations of the constructions
with subcodes of different parameters.
Starting with these codes and applying switching,
we construct a large variety of MRD codes,
including codes with different values of
characteristics of non-lenearity,
such as the affine rank and
the dimension of the kernel.


The structure of the paper is as follows.
The next section contains definitions and preliminary results;
in particular, we describe switching MRD subcodes in MRD codes.
For such switching to be possible,
we need MRD codes with MRD subcodes.
Such codes are considered in Section~\ref{s:subcodes}
using twisted Gabidulin codes.
In Section~\ref{s:prod}, we consider a variant of the Cartesian product for MRD codes,
which also provides a family of codes admitting switching.
Next two sections contain important corollaries:
a lower bound on the number of MRD codes (Section~\ref{s:no})
and the possibility to construct MRD codes with different affine ranks
and MRD codes with trivial kernel (Section~\ref{s:rk}).
Section~\ref{s:concl} concludes the article.

\section{Preliminaries}

The vertex set of the \emph{bilinear forms graph}
$B_q(m,n)$ is the set of $m\times n$ matrices over $\GF{q}$;
two vertices $X$ and $Y$ are adjacent in $B_q(m,n)$
if and only if $\rank(Y-X)=1$.
The minimum-path distance in $B_q(m,n)$ coincides
with the rank distance $d(X,Y) = \rank(Y-X)$,
and the corresponding metric space is known as the
\emph{rank metric space};
we also denote it by $B_q(m,n)$ in this paper.
We assume $n\le m$, so the diameter of $B_q(m,n)$ is $n$.

If $n<m$, then the isometry group $\Aut(B_q(m,n))$ of the rank metric space (automorphism group of the bilinear forms graph) $B_q(m,n)$
is the product of $\GL(m)$ (multiplication by non-singular $m\times m$ matrices in the left),  $\GL(n)$ (multiplication by non-singular $n\times n$ matrices in the right), $\GF{q}^{mn\,+}$ (translations, addition of $m\times n$ matrices),
and $\Aut(\GF{q})$ (field automorphism, acting simultaneously in all $mn$ elements of the matrix), see e.g. \cite[Th.\,9.5.1]{Brouwer}.
Taking into account that the instances of $\GL(m)$ and $\GL(n)$ above 
intersect in a subgroup isomorphic
to $\GF{q}^\times$ (multiplication by a nonzero scalar), 
the order of $\Aut(B_q(m,n))$ is\linebreak
$[m]_q!\cdot [n]_q! \cdot (q-1)^{n+m-1} \cdot q^{\binom{m}{2}+\binom{n}{2}+mn}
\cdot \log_p q$, where $[m]_q!=\prod_{k=0}^{m-1}\sum_{t=0}^k q^t$ and $p$ is the prime divisor of $q$.
In the case $m=n$,  $\Aut(B_q(m,n))$ also includes the matrix transposing.

If, for $n'<n$, we fix (say, by $0$s) the values of the elements in $n-n'$ columns,
then we obtain a set of $q^{mn'}$ matrices that induces a subspace isometric to
(a subgraph isomorphic to) $B_q(m,n')$. This set is a maximum set of diameter $n'$
in  $B_q(m,n)$, and it will be called an \emph{$n'$-anticode}, as well as any other set
that is equivalent to it under $\Aut(B_q(m,n))$.

\begin{lemma}\label{l:anti}
 An $n'$-anticode is  a maximum set of diameter $n'$
in  $B_q(m,n)$. Moreover, every two vertices at distance at most $n'$ from each other
are included in at least one $n'$-anticode.
\end{lemma}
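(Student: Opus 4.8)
The plan is to prove the two assertions separately, relying on the known structure of anticodes and the classical Singleton-type bound for anticodes in association schemes (or, concretely, a direct counting argument in $B_q(m,n)$).

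For the first assertion, I would start from the canonical $n'$-anticode $A_0$ obtained by fixing $n-n'$ columns to $0$; this set has diameter $n'$ and cardinality $q^{mn'}$. Since every set equivalent to $A_0$ under $\Aut(B_q(m,n))$ has the same diameter and cardinality (the group acts by isometries), it suffices to show that $q^{mn'}$ is the maximum possible size of any set of diameter at most $n'$. The cleanest route is the \emph{anticode bound}: in the rank-metric space $B_q(m,n)$ the maximum cardinality of a set in which all pairwise distances are at most $n'$ equals $q^{mn'}$ (this is the diameter-$n'$ anticode bound, and it is exactly the quantity that makes MRD codes diameter perfect, as mentioned in the abstract). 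I would either cite this bound directly or reprove it by the standard eigenvalue/linear-programming argument for the distance-regular graph $B_q(m,n)$, whose eigenvalues and Delsarte-type bounds are classical (see \cite{Brouwer}). Thus $A_0$, and hence any $n'$-anticode, attains the bound and is a maximum set of diameter $n'$.

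For the second (\textbf{Moreover}) assertion, I would take two vertices $X,Y$ with $d(X,Y)=\rank(Y-X)=r\le n'$ and exhibit an $n'$-anticode containing both. By translating by $-X$ (a field translation lies in $\Aut(B_q(m,n))$) I may assume $X=0$, so I need an $n'$-anticode containing $0$ and a matrix $M=Y-X$ of rank $r$. Using the $\GL(m)\times\GL(n)$ part of the isometry group I can bring $M$ into a normal form whose nonzero entries lie in the first $r$ columns (for instance, $M$ supported on $r$ columns after row/column operations). Then the anticode that fixes the last $n-n'$ columns to $0$ contains both $0$ and this normal form of $M$, since $r\le n'$ means the support of $M$ avoids the fixed columns. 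Pulling back through the chosen isometry yields an $n'$-anticode containing the original $X$ and $Y$.

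The main obstacle is the first assertion's optimality claim: proving that no set of diameter $n'$ can exceed $q^{mn'}$ is the real content, whereas the containment statement is essentially a normal-form/orbit computation. I expect to handle optimality by invoking the established diameter-$n'$ anticode bound for the bilinear forms scheme rather than redoing the linear-programming computation, since this bound is exactly the dual companion to the MRD (Singleton) bound and is well documented. The remaining care is purely bookkeeping: verifying that the reduction to $X=0$ and the row/column reduction of $M$ stay inside $\Aut(B_q(m,n))$, which is immediate from the explicit description of the isometry group given just before the lemma.
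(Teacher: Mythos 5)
Your ``moreover'' argument coincides with the paper's: translate so that one vertex is $0$, use right multiplication by an element of $\GL(n)$ (a column transformation, which is an isometry) to push the difference matrix, of rank $r\le n'$, into $n'$ columns, and observe that both vertices then lie in a common canonical $n'$-anticode. Your opening observations for the first assertion (the canonical anticode has diameter $n'$ and cardinality $q^{mn'}$, and equivalence under $\Aut(B_q(m,n))$ preserves both) also match the paper.

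The divergence is in the maximality step, which you correctly identify as the real content, but which your write-up either assumes or defers. Your primary plan --- citing ``the established diameter-$n'$ anticode bound for the bilinear forms scheme'' --- is borderline circular in this context, since that bound \emph{is} the first assertion of the lemma; and your fallback (redoing the eigenvalue/linear-programming computation for $B_q(m,n)$) is left unexecuted and is far heavier than necessary. The paper gets maximality in two lines by a duality trick: it invokes the general Delsarte code--anticode inequality \cite{Delsarte:1973} --- for any distance-$d$ code $C$ and any set $A$ of diameter less than $d$ one has $|C|\cdot|A|\le q^{mn}$ --- and instantiates $C$ as a Gabidulin MRD code of distance $d:=n'+1$, which exists for all parameters and has cardinality $q^{m(n-n')}$. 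This immediately yields $|A|\le q^{mn'}$ for every set of diameter at most $n'$, with no spectral computation and no appeal to the anticode-maximality statement itself. So your structure is sound and half of your proof is the paper's proof, but to make the argument self-contained you would need precisely this idea: combine the generic code--anticode inequality with the \emph{existence} of MRD codes of the complementary distance, rather than citing the anticode bound as a black box.
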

\begin{proof}
 Since a matrix with $n'$, $n'<n$, non-zero rows has rank at most
 $n'$, the diameter of an  $n'$-anticode is $n'$.

 The cardinality of an $n'$-anticode is obviously $q^{mn'}$.

 The Delsarte code--anticode bound \cite[Theorem~3.9]{Delsarte:1973} says that
 if $C$ is a distance-$d$
 code and $A$ is a set of diameter smaller than $d$,
 then $|C|\cdot |A|$ cannot be larger than
 the space size (the graph order).
 Since MRD codes of distance $d:=n'+1$
 (see the definition below)
 exist, e.g., Gabidulin codes, we conclude that $n'$-anticodes
 defined above are maximum sets of diameter $n'$.

 Finally, if two vertices are at distance at most $n'$ from each other,
 then their difference can be converted to a matrix with at most $n'$
 nonzero columns by nonsingular linear transformation of the column space.
 It follows that those vertices belong to the same $n'$-anticode.
\end{proof}

\begin{definition}
 A set $C$ of vertices of $B_q(m,n)$ is called a distance-$d$ \emph{MRD code}
 (maximum rank-distance code) if $|C\cap B|=1$ for every $(d-1)$-anticode $B$.
\end{definition}
By the definition, every MRD code is a diameter-perfect code, according to
\cite{AhlAydKha}. Every distance-$d$ code of cardinality $q^{m(n-d+1)}$ is MRD.

\begin{lemma}\label{l:2}
Let $S$ be the set of $q^{m'n}$ matrices of $B_q(m,n)$ with fixed $m-m'$ rows, $n\le m' \le m$
(so, the induced metric space (graph) is $B_q(m',n)$). For every $n'$-anticose $B$,
the set $B\cap S$ is either empty or an anticode in $B_q(m,n)$ corresponding to $S$.
\end{lemma}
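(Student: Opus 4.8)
The plan is to reduce everything to an explicit description of $n'$-anticodes in terms of row spaces, after which the claim becomes a direct intersection computation. First I would record the following characterization: a set $B$ is an $n'$-anticode of $B_q(m,n)$ if and only if there are a fixed $m\times n$ matrix $T$ and a fixed $n'$-dimensional subspace $W\subseteq\GF{q}^n$ with
\[
 B=\{\,T+M:\ \rowspace(M)\subseteq W\,\},
\]
i.e.\ $B$ is a coset of the group of matrices all of whose rows lie in $W$. The standard anticode, obtained by fixing the last $n-n'$ columns to $0$, is exactly this set with $T=0$ and $W=W_0:=\Span(e_1,\dots,e_{n'})$, since a matrix vanishes on those columns precisely when each of its rows lies in $W_0$.

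To obtain the general form I would push this description through the generators of $\Aut(B_q(m,n))$ listed in the Preliminaries. Left multiplication $M\mapsto PM$ by $P\in\GL(m)$ sends each row to a linear combination of rows, hence preserves the condition $\rowspace(M)\subseteq W$ with the \emph{same} $W$; right multiplication $M\mapsto MA$ by $A\in\GL(n)$ replaces $W$ by $WA$, again $n'$-dimensional; a field automorphism replaces $W$ by its bijective image; and a translation only changes the representative $T$. Since every anticode is the image of the standard one under such an automorphism, each $n'$-anticode has the displayed coset-of-row-space form; conversely, because $\GL(n)$ acts transitively on the $n'$-dimensional subspaces, every such set arises this way and is an anticode.

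With this in hand I would compute $B\cap S$ directly. Let $I$ be the set of $m-m'$ indices of the rows fixed in $S$, let $r_i\in\GF{q}^n$ be the prescribed value of row $i$ for $i\in I$, and write $T_i$ for the $i$th row of $T$. A matrix $X$ lies in $B\cap S$ exactly when $X_i=r_i$ for $i\in I$ and $X_j-T_j\in W$ for every row $j$. For the fixed rows this forces the single consistency condition $r_i-T_i\in W$ for all $i\in I$; if it fails for some $i$, then $B\cap S=\emptyset$. Otherwise the fixed-row constraints are automatically met, and the free rows $j\notin I$ range independently over the cosets $T_j+W$. Restricting to the $m'$ free rows identifies $S$ with $B_q(m',n)$, and under this identification
\[
 B\cap S=\{\,T'+M':\ \rowspace(M')\subseteq W\,\},
\]
where $T'$ is $T$ restricted to the free rows; by the characterization this is precisely an $n'$-anticode of $B_q(m',n)$ (its cardinality $q^{m'n'}$ also checks out).

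I expect the only real work to be the anticode characterization in the first two paragraphs: one must verify that the row-space/coset description is preserved by each generator of the isometry group and that $W_0$ is stable under field automorphisms. Once that is settled, the intersection step is bookkeeping on rows, and the dichotomy ``empty or anticode'' falls out of whether the fixed rows of $S$ are compatible with the coset $T+W$ defining $B$.
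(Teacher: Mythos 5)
Your proof is correct and takes essentially the same route as the paper's: the paper's one-line reduction---``up to a nonsingular linear transformation of the column space, the anticode is obtained by fixing the values in $n-n'$ columns''---is precisely your coset-of-row-space characterization (equivalent via the transitivity of $\GL(n)$ on $n'$-dimensional subspaces), and the paper's ``the claim is then trivial'' is your row-by-row intersection bookkeeping. One caveat, shared with the paper: the generator list you invoke is the one valid for $n<m$, whereas for $n=m$ the isometry group also contains transposition and column-space anticodes escape your characterization---but in that case $n\le m'\le m$ forces $m'=m$, so $S$ is the whole space and the lemma holds trivially.
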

\begin{proof}
 Up to a nonsingular linear transformation of the column space,
 we can assume that the anticode is obtained by fixing the values in
 $n-n'$ columns. The claim is then trivial.
\end{proof}

\begin{theorem}[switching MRD]\label{th:swiMRD}
 Let $S$ be the set of $q^{m'n}$ matrices of $B_q(m,n)$ with fixed (say, by $0$s)
 values in $m-m'$ rows,
 $n\le m' \le m$.
 Assume that $C$ is a distance-$d$ MRD code in $B_q(m,n)$ such that
 $|S\cap C|=q^{m'(n-d+1)}$ (i.e., $S\cap C$ is MRD in $B_q(m',n)$). Then for every other
 MRD code $R$ in $B_q(m',n)$ (corresponding to $S$), the set
 $$C_R:= C - (S\cap C) + R $$
 is an MRD code in $B_q(m,n)$ (obtained by switching).
\end{theorem}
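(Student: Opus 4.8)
The plan is to verify $C_R$ directly against the defining property of MRD codes: for every $(d-1)$-anticode $B$ of $B_q(m,n)$, I will show $|C_R\cap B|=1$. By definition this property is exactly what it means to be a distance-$d$ MRD code, and, via Lemma~\ref{l:anti}, it already forces the minimum distance to be at least $d$ (two codewords at distance $\le d-1$ would lie in a common $(d-1)$-anticode), so no separate distance or cardinality argument is needed. First I would rewrite the switched set as a disjoint union: since $R\subseteq S$, we have $C-(S\cap C)=C\setminus S$, hence $C_R=(C\setminus S)\cup R$ with the two pieces disjoint (one outside $S$, the other inside). Thus $|C_R\cap B|=|(C\setminus S)\cap B|+|R\cap B|$ for every $B$. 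If $B\cap S=\emptyset$, then $R\cap B=\emptyset$ and $(C\setminus S)\cap B=C\cap B$, so the MRD property of $C$ gives $|C_R\cap B|=|C\cap B|=1$ at once.

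The substantive case is $B\cap S\neq\emptyset$, where the key step is Lemma~\ref{l:2}: the intersection $B\cap S$ is itself a $(d-1)$-anticode of the subspace $B_q(m',n)$ carried by $S$. The point to check is that this anticode is the same for every object living in $S$: the column-space transformation used in Lemma~\ref{l:2} to normalize $B$ acts by right multiplication and therefore preserves the row pattern defining $S$. Granting this, $S\cap C$ and $R$ are both distance-$d$ MRD codes of $B_q(m',n)$ by hypothesis, so each meets the $(d-1)$-anticode $B\cap S$ exactly once; in particular $|(C\cap S)\cap B|=1$ and, since $R\subseteq S$, $|R\cap B|=|R\cap(B\cap S)|=1$. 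Decomposing $C\cap B=((C\setminus S)\cap B)\cup((C\cap S)\cap B)$ (disjointly) and using $|C\cap B|=1$ together with $|(C\cap S)\cap B|=1$ forces $|(C\setminus S)\cap B|=0$. Hence $|C_R\cap B|=0+1=1$, finishing this case.

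The only real obstacle is the bookkeeping hidden in the last case: one must be sure that, exactly when $B$ meets $S$, the unique codeword of $C$ inside $B$ is the one contributed by the subcode $S\cap C$ rather than by $C\setminus S$. This is precisely the content of $|(C\cap S)\cap B|=1$, and it is what allows the deletion of $S\cap C$ and the insertion of $R$ to swap one anticode representative for another while keeping the count at $1$. It rests entirely on two facts already available: that $B\cap S$ is a genuine $(d-1)$-anticode of $B_q(m',n)$ (Lemma~\ref{l:2}), and that $S\cap C$ is MRD there (the hypothesis $|S\cap C|=q^{m'(n-d+1)}$). No estimates beyond these are required.
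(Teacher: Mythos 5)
Your proof is correct and takes essentially the same route as the paper's: the same case split on whether the $(d-1)$-anticode meets $S$, the same appeal to Lemma~\ref{l:2}, and the same use of the MRD property of both $S\cap C$ and $R$ to swap the unique representative inside $B\cap S$. Your explicit disjoint-union bookkeeping $C_R=(C\setminus S)\cup R$ and the count $|(C\setminus S)\cap B|=0$ are just a more detailed rendering of the paper's step $M\cap C=(M\cap S)\cap(C\cap S)$, so no substantive difference remains.
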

\begin{proof}
We consider a $(d-1)$-andicode $M$. By the definition of MRD codes,
$|M\cap C|=1$. If $M\cap S=\emptyset$, then $M\cap C_R=M\cap C$ and so
$|M\cap C_R|=1$ as well. Otherwise, by Lemma~\ref{l:2} $M\cap S$ is an anticode  in $B_q(m',n)$.
Since $S\cap C$ is MRD in $B_q(m',n)$,
we have $|(M\cap S)\cap (C\cap S)|=1$,
and hence
$M\cap C=(M\cap S)\cap (C\cap S)$.
After replacing $(C\cap S)$ by $R$, we get
$M\cap C_R=(M\cap S)\cap R$.
Since $R$ is MRD in $B_q(m',n)$,
the last intersection consists of a single vertex,
and hence $|M\cap C_R|$ is also~$1$. By the definition, $C_R$ is MRD.
\end{proof}

In the next two sections we consider constructions of MRD codes that
satisfy the hypothesis of Theorem~\ref{th:swiMRD} and hence
admit switching.

\section{Gabidulin like codes with Gabidulin like subcodes}\label{s:subcodes}

In this section we construct MRD codes with MRD subcodes in the class of punctured Gabidulin twisted codes. We provide two constructions connected to two theorems. There are some limitations and some advantages of the constructions in this section. We explain these extra results in their propositions as well.
We begin with a paragraph aiming to illustrate
the main idea on a special case of classic Gabidulin codes,
without using a more advanced technique related with
the representation of matrices by linearised polynomials.

Let $n$, ${m'}$,  ${m}$,  $k$ be positive integers
such that $0<k\le n\le{m'}<m$.
In the classical construction of Gabidulin codes,
the columns of ${m'}$-by-$n$ matrices over $\GF{q}$ are identified
with elements of the field $\GF{q^{m'}}$, written in some fixed basis over $\GF{q}$.
The Gabidulin MRD code $ {C}'$ of distance $d=n-k+1$ in $B_q({m'},n)$
is defined as the linear span, over $\GF{q^{m'}}$, of the $k$ vectors
$(g_1^{q^i}, \ldots, g_n^{q^i})$, $i=0,\ldots,k-1$,
where $g_1$, \ldots, $g_n$ is an arbitrary collection
of $n$ linearly independent (over $\GF{q}$) elements of $\GF{q^{m'}}$.
If we define another Gabidulin code $ {C}$
as the span of the same basis over the larger field $\GF{q^{m}}$, $\GF{q^{m'}}\subset\GF{q^{m}}$, then we have $ {C'}\subset {C}$. With properly chosen basis of $\GF{q^{m}}$, the code ${C}$ satisfies
the hypothesis of Theorem~\ref{th:swiMRD}, which shows
that $m$-by-$n$ Gabidulin codes admit switching if $m$ has a proper divisor $m'$ such that $m'\ge n$.

To describe more general constructions, we recall the definition
of the following two functions from $\F_{q^m}$ to  $\F_q$:
$$\Norm_{q^m/q}(x)=x \cdot x^q \cdot \ldots \cdot x^{q^{m-1}}=x^{\frac{q^m-1}{q-1}},$$
$$\Tr_{q^m/q}(x)=x + x^q + \cdots + x^{q^{m-1}}.$$


\begin{construction} \label{construction1}
Let $n$, $m_1$, $m$, and $d$ be positive integers such that $2\le d\le n \le m_1 < m$
and $m_1\mid m$. Let $W_1 \subseteq \F_{q^m}$
be the kernel of the $\F_{q^{m_1}}$-linear map $\Tr_{q^m/q^{m_1}}:  \F_{q^{m}} \to \F_{q^{m_1}}$. Note that $\dim_{\F_q} W_1= m-m_1$.
Let $W$ be an $\F_q$-linear intermediate $W_1 \subseteq W \subseteq \F_{q^m}$ subspace such that $\dim W=m-n$. Put $A(x)=\prod_{w \in W}(x-w)=x^{q^{m-n}} + \mbox{m.o.t} + \alpha_0 x$,
where $\mbox{m.o.t}$ stands for middle order terms here and throughout the paper. If $d=1$, then let $\eta=0$. If $d>1$, then let $\eta \in \F_{q^{m_1}}$ such that $\Norm_{q^m/q}(\alpha_0) \neq (-1)^{m(m-d+1)} \left(\Norm_{q^{m_1}/q}(\eta)\right)^{\frac{m}{m_1}}$.
Let
$$\mathcal{P}(W,\eta)=\{A(x) \circ (a_0x + a_1 x^q + \cdots + a_{n-d} x^{q^{n-d}} + \eta a_0 x^{q^{n-d+1}}): a_0,a_1, \ldots, a_{n-d} \in \F_{q^m}\}.$$ Here and throughout the paper $\circ$ stands for the functional composition. Let $\mathcal{P}_0(W,W_1,\eta)$ be the $\F_q$-linear subspace of $\mathcal{P}(W,\eta)$ defined as
$$
\mathcal{P}_0(W,W_1,\eta)=\big\{A(x) \circ (a_0x + a_1 x^q + \cdots + a_{n-d} x^{q^{n-d}} + \eta a_0 x^{q^{n-d+1}}): a_0,a_1, \ldots, a_{n-d} \in \F_{q^{m_1}}\big\}.
$$
$\mathcal{P}(W,\eta)$ corresponds to a punctured MRD Gabidulin twisted code and $\mathcal{P}_0(W,W_1,\eta)$ corresponds to an MRD subcode of it as explained in Theorem~\ref{theorem. construction1} below.
\end{construction}

We prove that Construction~\ref{construction1} holds in the following theorem.

\begin{theorem} \label{theorem. construction1}
We keep the notation and assumptions of Construction~\ref{construction1}. We choose an ordered basis $(e_1, \ldots, e_m)$ of $\F_{q^m}$ over $\F_q$ such that the last $m-m_1$ entries form a basis of $W_1$ over $\F_q$. Note that $\Ima(A)$ is an $\F_q$-linear subspace of dimension $n$ in $\F_{q^m}$. We choose an ordered basis $(f_1, \ldots, f_n)$ of $\Ima(A)$ over $\F_q$.

For each $L \in \mathcal{P}(W,\eta)$, let $M(L)$ denote the $m\times n$ matrix representing the evaluation map $L: \F_{q^m} \to \Ima(A)$ defined as $x \mapsto L(x)$ with respect to the bases $(e_1, \ldots, e_m)$ and $(f_1, \ldots, f_n)$.
Let $\mathcal{C}(W,\eta)$ be the $\F_q$-linear subspace of $\F_{q^{m \times n}}$ consisting of $M(L)$ as $L$ runs through
$\mathcal{P}(W,\eta)$. Similarly let $\mathcal{C}_0(W,W_1,\eta)$ be the $\F_q$-linear subspace of $\mathcal{C}(W,\eta)$ consisting of $M(L)$ as $L$ runs through
$\mathcal{P}_0(W,W_1,\eta)$.

Then we have the followings:
\begin{itemize}
\item $\mathcal{C}(W,\eta)$ is an MRD code of minimum rank distance $d$.
\item  If $M \in \mathcal{C}_0(W,W_1,\eta)$, then the last $m-m_1$ rows of $M$ are the zero rows. Considering $\mathcal{C}_0(W,W_1,\eta)$ in $\F_q^{m_1 \times n}$ by ignoring the last $m-m_1$ zero rows, we obtain an MRD code of mimimum rank distance $d$.
\end{itemize}
\end{theorem}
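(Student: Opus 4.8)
The plan is to translate both assertions into the language of linearised ($q$-)polynomials and to reduce the maximum-rank-distance property to a single root-counting statement, which is then settled by the classical criterion for a $q$-polynomial to attain its maximal number of roots. For $L\in\mathcal{P}(W,\eta)$ the matrix $M(L)$ represents the $\F_q$-linear map $L\colon\F_{q^m}\to\Ima(A)$, so $\rank M(L)=\dim_{\F_q}\Ima(L)=m-\dim_{\F_q}\Ker(L)$. Writing $L=A\circ g$ with $g=a_0x+\cdots+a_{n-d}x^{q^{n-d}}+\eta a_0x^{q^{n-d+1}}$, the identity $\Ker(L)=g^{-1}(W)$ holds because $A(y)=0$ precisely when $y\in W$; hence the number of roots of $L$ in $\F_{q^m}$ is $q^{\dim\Ker(L)}$, and proving that the minimum rank distance of $\mathcal{C}(W,\eta)$ is at least $d$ amounts to showing that every nonzero $L\in\mathcal{P}(W,\eta)$ satisfies $\dim\Ker(L)\le m-d$. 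Granting this, $\rank M(L)\ge d>0$ for every nonzero parameter tuple forces $M(L)\neq 0$, so the $\F_q$-linear parametrisation $(a_0,\dots,a_{n-d})\mapsto M(L)$ is injective and $|\mathcal{C}(W,\eta)|=q^{m(n-d+1)}$; by the code--anticode bound a code of minimum distance $\ge d$ has at most $q^{m(n-d+1)}$ elements, so equality forces the distance to be exactly $d$, and by the remark after the definition of MRD codes $\mathcal{C}(W,\eta)$ is MRD.

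The heart of the argument, and the step I expect to be the main obstacle, is the root bound for $L=A\circ g$. Here I would invoke the necessary condition underlying the MRD property of twisted Gabidulin codes: a $q$-polynomial $\sum_{i=0}^{e}c_ix^{q^i}$ over $\F_{q^m}$ with $c_0c_e\ne0$ can have the maximal number $q^e$ of roots only if $\Norm_{q^m/q}(c_0)=(-1)^{me}\Norm_{q^m/q}(c_e)$. Two cases arise. If $a_0=0$ (in particular whenever $\eta=0$), the twisted term vanishes and $L$ has $q$-degree at most $(m-n)+(n-d)=m-d$, so $\dim\Ker(L)\le m-d$ automatically. If $a_0\ne0$, then $L$ has $q$-degree exactly $K:=m-d+1$, and a short computation of the composition gives the constant coefficient $c_0=\alpha_0 a_0$ (nonzero, since $\alpha_0=\prod_{w\in W\setminus\{0\}}(-w)\ne0$) and the leading coefficient $c_K=(\eta a_0)^{q^{m-n}}$ (as $A$ is monic of $q$-degree $m-n$). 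Using Frobenius-invariance of the norm, the equality $\Norm(c_0)=(-1)^{mK}\Norm(c_K)$ reduces, after cancelling $\Norm(a_0)\ne0$, to $\Norm_{q^m/q}(\alpha_0)=(-1)^{m(m-d+1)}\Norm_{q^m/q}(\eta)$; and since $\eta\in\F_{q^{m_1}}$ yields $\Norm_{q^m/q}(\eta)=(\Norm_{q^{m_1}/q}(\eta))^{m/m_1}$, this is exactly the equality excluded by the hypothesis on $\eta$. Hence $L$ cannot attain $q^K$ roots, so $\dim\Ker(L)\le K-1=m-d$, which completes the first bullet.

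For the second bullet I would first show $W_1\subseteq\Ker(L)$ for every $L\in\mathcal{P}_0(W,W_1,\eta)$. Since the coefficients $a_i$ and $\eta$ lie in $\F_{q^{m_1}}$, and $\Tr_{q^m/q^{m_1}}$ is $\F_{q^{m_1}}$-linear and commutes with the Frobenius $x\mapsto x^q$, for $w\in W_1=\Ker\Tr_{q^m/q^{m_1}}$ one gets $\Tr_{q^m/q^{m_1}}(g(w))=\sum_i a_i(\Tr_{q^m/q^{m_1}}(w))^{q^i}+\eta a_0(\Tr_{q^m/q^{m_1}}(w))^{q^{n-d+1}}=0$, so $g(W_1)\subseteq W_1\subseteq W=\Ker(A)$ and therefore $L(W_1)=0$. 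With the chosen basis this says exactly that the last $m-m_1$ rows of $M(L)$ vanish. Ignoring those zero rows, the rank of the remaining $m_1\times n$ matrix equals $\dim_{\F_q}\Ima(L)=\rank M(L)$, because $\F_{q^m}=\langle e_1,\dots,e_{m_1}\rangle\oplus W_1$ and $L(W_1)=0$; this rank is at least $d$ by the first bullet applied to the inclusion $\mathcal{P}_0(W,W_1,\eta)\subseteq\mathcal{P}(W,\eta)$.

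Finally, the parametrisation by $(a_0,\dots,a_{n-d})\in\F_{q^{m_1}}^{\,n-d+1}$ is injective for the same reason as before, so the restricted code has cardinality $q^{m_1(n-d+1)}$; together with minimum distance $\ge d$ in $B_q(m_1,n)$ and the code--anticode bound, this makes it MRD of distance $d$. I therefore expect the only genuinely technical point to be the precise computation of $c_0$ and $c_K$ for the composed polynomial $A\circ g$ and the clean bookkeeping of the norm identity, since this is exactly where the hypothesis on $\eta$ is consumed; everything else is either a dimension count or the trace manipulation above.
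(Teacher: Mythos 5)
Your proof is correct. For the first bullet it is essentially the paper's own argument with the citation unpacked: the paper derives the same composition formula (\ref{ep1.theorem. construction1}) and then simply appeals to \cite{Gow-Q} for the bound $\dim_{\F_q}\Ker(L)\le m-d$, whereas you state the Gow--Quinlan norm criterion explicitly and carry out the cancellation of $\Norm_{q^m/q}(a_0)$ that turns it into the excluded identity $\Norm_{q^m/q}(\alpha_0)=(-1)^{m(m-d+1)}\bigl(\Norm_{q^{m_1}/q}(\eta)\bigr)^{m/m_1}$, which is exactly where the hypothesis on $\eta$ is consumed. One wording slip there: the correct dichotomy is $\eta a_0=0$ versus $\eta a_0\ne 0$, not $a_0=0$ versus $a_0\ne 0$; since $\eta=0$ satisfies the norm condition and is therefore allowed, the combination $\eta=0$, $a_0\ne 0$ would make your second case falsely assert $q$-degree exactly $m-d+1$, although your parenthetical shows you intend such tuples to fall into the first case, where the degree bound $m-d$ disposes of them. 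For the second bullet you genuinely depart from the paper: the paper invokes Ore's factorization theorem \cite{Ore} to write $A=B\circ A_1$ with $A_1=\Tr_{q^m/q^{m_1}}$, shows $A_1(g(w_1))=0$ by the trace/Frobenius commutation, and concludes $A(g(w_1))=B(0)=0$; you run the same trace computation but conclude directly that $g(W_1)\subseteq W_1\subseteq W$ and that $W$ is precisely the root set of $A=\prod_{w\in W}(x-w)$, so no factorization is needed. Your version is more elementary, and it is in fact the argument the paper itself uses later for the more general Construction~\ref{construction2}: your containment $g(W_1)\subseteq W$ is exactly property P.2 there, so your proof of the second bullet amounts to verifying that Construction~\ref{construction1} satisfies the hypotheses of Construction~\ref{construction2}, a point the paper asserts (``Construction~\ref{construction2} generalizes Construction~\ref{construction1}'') without spelling out. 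The remaining steps (injectivity of the linear parametrisation via the rank bound, the dimension counts $m(n-d+1)$ and $m_1(n-d+1)$, and Singleton to pin the distance at exactly $d$) coincide with the paper's.
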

\begin{proof}
Let $a_0,a_1, \ldots, a_{n-d} \in \F_{q^m}$, not all zero. As $A(x)=x^{q^{m-n}} + \mbox{m.o.t} + \alpha_0 x$, by definition of functional composition we have that $A(x) \circ (a_0x + a_1 x^q + \cdots + a_{n-d} x^{q^{n-d}} + \eta a_0 x^{q^{n-d+1}})$ is equal to
\be \label{ep1.theorem. construction1}
\eta^{q^{m-n}} a_0^{q^{m-n}} x^{q^{m-d+1}} + a_{n-d}^{q^{m-n}} x^{q^{m-d}} + \mbox{m.o.t} + \alpha_0 a_0 x.
\ee
Note that $A(x) \circ (a_0x + a_1 x^q + \cdots + a_{n-d} x^{q^{n-d}} + \eta a_0 x^{q^{n-d+1}})$ is an additive polynomial and its solution set in $\F_{q^m}$ is an $\F_q$-linear subspace. Using (\ref{ep1.theorem. construction1}) and \cite{Gow-Q} we conclude that its solution set has dimension at most $m-d$ over $\F_q$. This implies that the corresponding evaluation map has rank at least $d$. Consequently
$\dim_{\F_q} \mathcal{C}(W,\eta)=m(n-d+1)$. Using Singleton's bound we conclude that $\mathcal{C}(W,\eta)$ is an MRD code of minimal rank distance~$d$ in $\F_q^{m \times n}$.

Put $A_1(x)=\Tr_{q^m/q^{m_1}}(x)$. We have the property that $A_1(\alpha_1 x) =\alpha_1 A_1(x)$ for any $\alpha_1 \in \F_{q^{m_1}}$. We also have the property that $A_1(x^q)=A_1(x)^q$ for any $x \in \F_{q^m}$. Using \cite{Ore} we obtain that there exists an $\F_q$-additive polynomial $B(x) \in \F_{q^m}[x]$ such that $A(x)=B(x) \circ A_1(x)$. Hence, for $a_0,a_1, \ldots, a_{n_d} \in \F_{q^{m_1}}$ and $w_1 \in W_1$ we obtain that
\begin{multline*}
A_1(a_0w_1 + a_1 w_1^q + \cdots + a_{n-d} w_1^{q^{n-d}} + \eta a_0 w_1^{q^{n-d+1}}) \\
= a_0A_1(w_1) + a_1 A_1(w_1)^q + \cdots + a_{n-d} A_1(w_1)^{q^{n-d}} + \eta a_0 A_1(w_1)^{q^{n-d+1}}=0.
\end{multline*}
Note that $\dim_{\F_q} \mathcal{P}_0(W,W_1\eta)=m_1(n-d+1)$. Using Singleton's bound we complete the proof by showing that $\mathcal{C}_0(W,W_1,\eta)$ becomes an MRD code of minimum rank distance $d$ when considered in $\F_q^{m_1 \times n}$.
\end{proof}

It is impossible to generalize Construction~\ref{construction1} directly to the situation $m_1 \nmid m$ in some cases.
Namely we have the following proposition.
\begin{proposition} \label{proposistion.new.Wedderburn type}
As in Construction~\ref{construction1}, apart from the statement $m_1 \nmid m$, we have similar assumptions, but for some special parameters as follows. Namely
let $n$, $m_1$, $m$ be positive integers such that $n=m_1 < m$, $m_1\nmid m$ and let $d = n$. Let $W \subseteq \F_{q^m}$
be an $\F_{q^m}$-linear subspace such that $\dim_{\F_q}=m-n$. Put $A(x)=\prod_{w \in W}(x-w)=x^{q^m} + \mbox{m.o.t} + \alpha_0 x$.
Let $\eta \in \F_{q^m}$ such that $\Norm_{q^m/q}(\eta) \not \in \{(-1)^m, (-1)^m\Norm_{q^m/q}(\alpha_0)\}$.

Let $\mathcal{P}(W,\eta)=\{A(x) \circ (a_0x + \eta a_0 x^q): a_0\in \F_{q^m}\}$.

Let $(e_1, \ldots, e_n)$  be an ordered basis of $\F_{q^m}$ over $\F_q$.
Let $(f_1, \ldots, f_n)$ be an ordered basis of $\Ima(A)$ over $\F_q$.
Provided that these bases are chosen, for  each $L \in \mathcal{P}(W,\eta)$, let $M(L)$ denote the $m\times n$ matrix representing the evaluation map $L: \F_{q^m} \to \Ima(A)$ defined as $x \mapsto L(x)$ with respect to the bases $(e_1, \ldots, e_m)$ and $(f_1, \ldots, f_n)$. Let $\mathcal{C}(W,\eta)$  be the $\F_q$-linear subspace of $\F_{q^{m \times n}}$ consisting of $M(L)$ as $L$ runs through
$\mathcal{P}(W,\eta)$.

Then we have the followings:
\begin{itemize}
\item For any choice of $(e_1, \ldots, e_m)$, the set $\mathcal{C}(W,\eta)$ is an MRD code of minimum rank distance $d$.
\item It is impossible to choose an ordered basis $(e_1, \ldots, e_m)$ and a suitable $\F_q$-linear subspace $\mathcal{C}_0(W,\eta)$ of $\mathcal{C}(W,\eta)$ such that the last $m-m_1$ rows of each matrix in $\mathcal{C}_0(W,\eta)$ are the zero rows and ignoring the last $m-m_1$ zero rows, $\mathcal{C}_0(W,\eta)$ becomes an MRD code of minimum rank distance $d$.
\end{itemize}
\end{proposition}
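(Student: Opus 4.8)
The plan is to handle the two bullets separately. For the first (MRD), I would write $P(x)=x+\eta x^q$, so that the inner polynomial is $a_0P(x)$ and $L_{a_0}:=A(x)\circ(a_0x+\eta a_0x^q)=A(a_0P(x))$. The map $P$ is $q$-linearized of $q$-degree $1$, and $x+\eta x^q$ has a nonzero root in $\F_{q^m}$ exactly when $\Norm_{q^m/q}(\eta)=(-1)^m$; hence the hypothesis $\Norm_{q^m/q}(\eta)\neq(-1)^m$ makes $P$ an $\F_q$-linear bijection of $\F_{q^m}$. Since $\Ker A=W$, for every $a_0\neq 0$ one has $\Ker L_{a_0}=P^{-1}(a_0^{-1}W)$, of $\F_q$-dimension $\dim_{\F_q}W=m-n$ because $P$ is bijective. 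Thus every nonzero $M(L)$ has rank $n=d$ for any basis $(e_1,\dots,e_m)$, and since $\dim_{\F_q}\mathcal{P}(W,\eta)=m=m(n-d+1)$, Singleton's bound gives that $\mathcal{C}(W,\eta)$ is MRD of distance $d$. This is the $d=n$ specialization of the argument of Theorem~\ref{theorem. construction1}, the hypotheses on $\Norm_{q^m/q}(\eta)$ being the non-degeneracy conditions that force this full-rank property.

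For the second bullet I would argue by contradiction: suppose an ordered basis $(e_1,\dots,e_m)$ and an $\F_q$-subspace $\mathcal{C}_0\subseteq\mathcal{C}(W,\eta)$ realize the claimed structure. Because $a_0\mapsto M(L_{a_0})$ is an $\F_q$-linear isomorphism of $\F_{q^m}$ onto $\mathcal{C}(W,\eta)$, the subspace $\mathcal{C}_0$ corresponds to an $\F_q$-subspace $V_0\subseteq\F_{q^m}$; and since an MRD code of distance $n$ in $\F_q^{n\times n}$ has $q^n$ codewords, $\dim_{\F_q}V_0=n$. Put $U=\Span_{\F_q}(e_{n+1},\dots,e_m)$, so $\dim_{\F_q}U=m-n$; the vanishing of the last $m-n$ rows of every $M(L_{a_0})$ with $a_0\in V_0$ is exactly $U\subseteq\Ker L_{a_0}$ for all $a_0\in V_0$. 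With $U'=P(U)$ (again $(m-n)$-dimensional, as $P$ is bijective) and $\Ker L_{a_0}=P^{-1}(a_0^{-1}W)$, this becomes the single multiplicative containment
\[ V_0\cdot U'\subseteq W \]
(meaning every product $a_0u'$, $a_0\in V_0$, $u'\in U'$, lies in $W$), with $\dim_{\F_q}V_0=n$ and $\dim_{\F_q}U'=m-n=\dim_{\F_q}W$.

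To conclude, I would fix a nonzero $v_0\in V_0$; then $v_0U'\subseteq W$ together with equality of dimensions gives $v_0U'=W$. Setting $V_1=v_0^{-1}V_0$ (an $n$-dimensional $\F_q$-space containing $1$) and using commutativity, for $v\in V_0$ one has $(v_0^{-1}v)W=vU'\subseteq W$, so $V_1W\subseteq W$ and hence $\F_q[V_1]\,W\subseteq W$. But $\F_q[V_1]$ is a finite commutative domain, so it is a field $\F_{q^t}$ with $t\mid m$ and $t\ge\dim_{\F_q}V_1=n$; and $W$ being a nonzero $\F_{q^t}$-subspace forces $t\mid(m-n)$, whence $t\mid n$ and $t=n$, i.e.\ $n\mid m$. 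This contradicts $m_1=n\nmid m$, which is exactly the obstruction sought.

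The coefficient and rank bookkeeping of the first bullet is routine. The crux, and the main obstacle, is the reduction of the second paragraph: turning ``the last $m-n$ rows vanish on an MRD subcode'' into the clean statement $V_0\cdot U'\subseteq W$. Two points need care there — that $V_0$ genuinely has dimension $n$ (this is where the MRD property of $\mathcal{C}_0$ enters) and that one and the same $U$ serves every codeword of $\mathcal{C}_0$ — and both rely on $P$ being a bijection, which is precisely where $\Norm_{q^m/q}(\eta)\neq(-1)^m$ is used. Once the containment is in hand, the Wedderburn-type step identifying $\F_q[V_1]$ as a subfield, and thereby forcing $n\mid m$, is short.
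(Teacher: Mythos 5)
Your proof is correct and takes essentially the same route as the paper's: your containment $V_0\cdot U'\subseteq W$ is the paper's condition $a_0\widehat{W}_1=W$ (your $U'$ is the paper's $\widehat{W}_1=\Psi(W_1)$), your normalization by $v_0$ is the paper's passage to $\widehat{S}=\{a_0/a_0^*\}$, and your identification of $\F_q[V_1]$ as a subfield $\F_{q^t}$ with $t\mid m$, $t\mid(m-n)$, $t\ge n$ is exactly the paper's Wedderburn-type contradiction. The only (harmless) divergence is in the first bullet, where you argue directly via bijectivity of $P$ and the exact kernel computation $\Ker L_{a_0}=P^{-1}(a_0^{-1}W)$ --- using only $\Norm_{q^m/q}(\eta)\neq(-1)^m$ --- whereas the paper defers to the Gow--Quinlan kernel bound from the proof of Theorem~\ref{theorem. construction1}.
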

\begin{proof}
The proof of the first item is similar to that of Theorem \ref{theorem. construction1}. It remains to prove the second item. Assume the contrary and let $(e_1, \ldots, e_m)$ be such an ordered basis. Let $W_1$ be the $\F_q$-span of the last $m-m_1$  entries in the ordered basis $(e_1, \ldots, e_m)$. Note that $\dim W_1=m-m_1$.

Let $\Psi: \F_{q^m} \to \F_{q^m}$ be the map $x \mapsto x+ \eta x^q$. As $\Norm_{q^m/q}(\eta) \neq (-1)^m$, using \cite{Gow-Q} we conclude that $\Psi$ is a permutation. Put $\widehat{W}_1=\{w_1 + \eta w_1^q: w_1 \in W_1\}$. We obtain that
$\dim_{\F_q} \widehat{W}_1=m_1$.

As $\mathcal{C}_0(W,\eta)$ corresponds to an MRD code of minimum rank distance $d$ in $\F_q^{m_1 \times m}$, using Singleton's bound we get
that $\dim_{\F_q} \mathcal{C}_0(W,\eta)=m_1$. Let $S$ be the subset of $\F_{q^m}$ consisting of $a_0$ such that
$A(x)\circ \left( a_0(x+\eta x^q)\right)$ corresponds to a matrix in $\mathcal{C}_0(W,\eta)$. This is equivalent to the combined condition that
$\dim_{\F_q} S=m_1$ and the statement that if $a_0 \in S$, then $a_0 \widehat{W}_1=W$.

As $m_1 > 0$, let $a^*_0$ be an element in $S \setminus \{0\}$. Let $\widehat{S}=\{a_0/a_0^*: a_0 \in S\}$. Note that
$\widehat{S}$ is an $\F_q$-linear subspace of $\F_{q^m}$ with the properties that $1 \in \widehat{S}$ and  the statement that if $a \in \widehat{S}$, then $a \widehat{W}_1=\widehat{W}_1$.

Let $\bar{S} \subseteq \F_{q^m}$ be the largest subset consisting of $a \in \F_{q^m}$ such that $ a\widehat{W}_1=\widehat{W}_1$. It follows from the definition that $\widehat{S} \subseteq \bar{S}$. It also follows from the definition that $\bar{S}$ is not only additive but also closed under multiplication. This implies that $\bar{S}$ is a finite integral domain and hence $\bar{S}$ is an intermediate field $\F_q \subseteq \bar{S} \subseteq \F_{q^m}$ (see also Wedderburn's Theorem, for example, in \cite[Theorem 2.55]{LN}). Put $\dim_{\F_q}\bar{S}=\rho$. As $\bar{S}$ is an intermediate field we obtain that $\rho \mid m$. Note that this also implies that $\widehat{W}_1$ is a linear space over $\bar{S}$. As $\dim_{F_q}\widehat{W}_1=m-m_1$ we conclude that $\rho \mid (m-m_1)$ and hence
\be \label{ep1.proposistion.new.Wedderburn type}
\rho \mid m_1.
\ee
Note that $\widehat{S} \subseteq \bar{S}$ and $\widehat{S}$ is  an $\F_q$-linear space of dimension $m_1$. This immediately implies that
\be \label{ep2.proposistion.new.Wedderburn type}
m_1 \le \rho.
\ee
Combining (\ref{ep1.proposistion.new.Wedderburn type}) and (\ref{ep2.proposistion.new.Wedderburn type}) we conclude that
$m_1=\rho$. This is a contradiction as $m_1 \nmid m$.
\end{proof}

In some cases it is possible to extend Construction \ref{construction1} in order to cover the case $m_1 \nmid m$. We present such an extension below.

\begin{construction} \label{construction2}
Let $n$, $m_1$, $m$, and $d$  be positive integers such that
$2 \le d \le n \le m_1 < m$.
(We do not assume $m_1 \mid m$.)
Let $\eta$ from $\F_{q^m}$ is such that $\Norm_{q^m/q}(\alpha_0) \neq (-1)^{m(m-d+1)} \Norm_{q^m/q}(\eta)$.
Assume there exist $\F_q$-linear subspaces $W,W_1 \subseteq \F_{q^m}$ and
$
S \subseteq \underbrace{\F_{q^m} \times \cdots \times \F_{q^m}}_{n-d+1~\text{times}}
$
with the following properties:
\begin{itemize}
\item[P.1.] We have $\dim_{\F_q} W=m-n$, $\dim_{\F_q} W_1=m-m_1$, and $\dim_{\F_q} S= m_1(n-d+1)$. \smallskip
\item[P.2.] For each $(a_0, \ldots, a_{n-d}) \in S$ and for each $w_1 \in W_1$ we have that
\be
a_0w_1 + a_1w_1^q + \cdots +a_{n-d}w_1^{q^{n-d}} + \eta a_0 w_1^{q^{n-d+1}} \in W.
\nn\ee
\end{itemize}
Put  $A(x)=\prod_{w \in W}(x-w)=x^{q^m} + \mbox{m.o.t} + \alpha_0 x$.
Let
$$
\mathcal{P}(W,W_1,\eta)=\{A(x) \circ (a_0x + a_1 x^q + \cdots + a_{n-d} x^{q^{n-d}} + \eta a_0 x^{q^{n-d+1}}): a_0,a_1, \ldots, a_{n-d} \in \F_{q^m}\}.
$$
Let $\mathcal{P}_0(W,W_1,\eta)$ be the $\F_q$-linear subspace of $\mathcal{P}(W,W_1,\eta)$ defined as
$$
\mathcal{P}_0(W,W_1,\eta)=\{A(x) \circ (a_0x + a_1 x^q + \cdots + a_{n-d} x^{q^{n-d}} + \eta a_0 x^{q^{n-d+1}}): (a_0,a_1, \ldots, a_{n-d}) \in S\}.
$$
As in Construction~\ref{construction1}, $\mathcal{P}(W,W_1,\eta)$ corresponds to a punctured MRD Gabidulin twisted code and $\mathcal{P}_0(W,W_1,\eta)$ corresponds to an MRD subcode of it as explained in Theorem~\ref{theorem. construction2} below.
\end{construction}

We prove that Construction~\ref{construction2} holds in the following theorem.

\begin{theorem} \label{theorem. construction2}
We keep the notation and assumptions of Construction~\ref{construction2}. We choose and ordered basis $(e_1, \ldots, e_m)$ of $\F_{q^m}$ over $\F_q$ such that the last $m-m_1$ entries form a basis of $W_1$ over $\F_q$.  We choose an ordered basis $(f_1, \ldots, f_n)$ of $\Ima(A)$ over $\F_q$.

For each $L \in \mathcal{P}(W,W_1,\eta)$, let $M(L)$ denote the $m\times n$ matrix representing the evaluation map $L: \F_{q^m} \to \Ima(A)$ defined as $x \mapsto L(x)$ with respect to the bases $(e_1, \ldots, e_m)$ and $(f_1, \ldots, f_n)$.
Let $\mathcal{C}(W,W_1,\eta)$ be the $\F_q$-linear subspace of $\F_{q^{m \times n}}$ consisting of $M(L)$ as $L$ runs through
$\mathcal{P}(W,W_1,\eta)$. Similarly let $\mathcal{C}_0(W,W_1,\eta)$ be the $\F_q$-linear subspace of $\mathcal{C}(W,W_1,\eta)$ consisting of $M(L)$ as $L$ runs through
$\mathcal{P}_0(W,W_1,\eta)$.

Then we have the followings:
\begin{itemize}
\item $\mathcal{C}(W,W_1,\eta)$ is an MRD code of minimum rank distance $d$.
\item  If $M \in \mathcal{C}_0(W,W_1,\eta)$, then the last $m-m_1$ rows of $M$ are the zero rows. Considering $\mathcal{C}_0(W,W_1,\eta)$ in $\F_q^{m_1 \times n}$ by ignoring the last $m-m_1$ zero rows, we obtain an MRD code of mimimum rank distance $d$.
\end{itemize}
Moreover Construction \ref{construction2} generalizes Construction \ref{construction1}.
\end{theorem}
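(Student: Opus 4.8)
The plan is to treat the three assertions in turn, leaning on the already-proved Theorem~\ref{theorem. construction1} wherever the arguments coincide and isolating the genuinely new input, which lies entirely in the abstract hypotheses P.1 and P.2. For the first bullet I would reproduce the opening computation of the proof of Theorem~\ref{theorem. construction1} unchanged: for coefficients $(a_0,\ldots,a_{n-d})$ not all zero, the composed polynomial $A(x)\circ(a_0x+a_1x^q+\cdots+a_{n-d}x^{q^{n-d}}+\eta a_0 x^{q^{n-d+1}})$ has the explicit form~(\ref{ep1.theorem. construction1}), with top coefficient $\eta^{q^{m-n}}a_0^{q^{m-n}}$ on $x^{q^{m-d+1}}$ and bottom coefficient $\alpha_0 a_0$ on $x$. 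Nothing in this expansion uses a divisibility relation between $m_1$ and $m$. The hypothesis $\Norm_{q^m/q}(\alpha_0)\neq(-1)^{m(m-d+1)}\Norm_{q^m/q}(\eta)$ is exactly the condition under which~\cite{Gow-Q} forces the $\F_q$-dimension of the solution set to be at most $m-d$, so every nonzero codeword has rank at least $d$; hence $\dim_{\F_q}\mathcal{C}(W,W_1,\eta)=m(n-d+1)$, and Singleton's bound upgrades this to MRD of distance $d$.

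For the second bullet the role of Property~P.2 is precisely to guarantee that the inner linearized map $\ell(x)=a_0x+a_1x^q+\cdots+\eta a_0 x^{q^{n-d+1}}$ sends $W_1$ into $W=\Ker A$ for every $(a_0,\ldots,a_{n-d})\in S$; consequently $L=A\circ\ell$ annihilates $W_1$. Since the basis $(e_1,\ldots,e_m)$ is chosen so that its last $m-m_1$ vectors span $W_1$, the last $m-m_1$ rows of $M(L)$ vanish, which is the first claim. For the MRD claim I would observe that the map $(a_0,\ldots,a_{n-d})\mapsto M(L)$ is injective (this is contained in the first bullet, where nonzero coefficients produce rank at least $d$ and hence nonzero matrices), so restricting it to $S$ gives $\dim_{\F_q}\mathcal{C}_0(W,W_1,\eta)=\dim_{\F_q}S=m_1(n-d+1)$ by P.1. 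Deleting the vanishing rows changes no rank, so $\mathcal{C}_0$ inherits minimum distance $d$ from $\mathcal{C}$; with dimension $m_1(n-d+1)$ and distance $d$, Singleton's bound again yields an MRD code in $\F_q^{m_1\times n}$.

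For the final ``moreover'' I would exhibit Construction~\ref{construction1} as the special case of Construction~\ref{construction2} obtained by taking $m_1\mid m$, $W_1=\Ker\Tr_{q^m/q^{m_1}}$, and $S=\F_{q^{m_1}}^{\,n-d+1}$, and then verify P.1 and P.2 for these choices. Property~P.1 is immediate: $\Tr_{q^m/q^{m_1}}$ is surjective and $\F_{q^{m_1}}$-linear, so $\dim_{\F_q}W_1=m-m_1$, while $\dim_{\F_q}S=m_1(n-d+1)$. Property~P.2 is exactly the computation already carried out in the proof of Theorem~\ref{theorem. construction1}: writing $A_1=\Tr_{q^m/q^{m_1}}$ and using $A_1(\alpha_1 x)=\alpha_1 A_1(x)$ for $\alpha_1\in\F_{q^{m_1}}$ together with $A_1(x^q)=A_1(x)^q$, one gets $A_1(\ell(w_1))=0$ for $w_1\in W_1$, and the factorization $A=B\circ A_1$ furnished by Ore's theorem~\cite{Ore} then gives $A(\ell(w_1))=B(0)=0$, i.e.\ $\ell(w_1)\in W$. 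It remains to reconcile the two $\eta$-conditions: for $\eta\in\F_{q^{m_1}}$ the norm-tower identity $\Norm_{q^m/q}(\eta)=\Norm_{q^{m_1}/q}\!\big(\Norm_{q^m/q^{m_1}}(\eta)\big)=\big(\Norm_{q^{m_1}/q}(\eta)\big)^{m/m_1}$ turns the hypothesis of Construction~\ref{construction1} into that of Construction~\ref{construction2}, and since the generating sets $\mathcal{P}$, $\mathcal{P}_0$ are defined by identical formulas, the two pairs of codes coincide.

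The conceptual content is already present in Theorem~\ref{theorem. construction1}, so I do not expect a serious obstacle; the delicate points are bookkeeping rather than ideas. The one place deserving care is confirming that the abstract hypotheses P.1--P.2 are the correct substitute for the divisibility $m_1\mid m$ --- in particular that injectivity of $(a_0,\ldots,a_{n-d})\mapsto M(L)$ survives the restriction to $S$, so that the dimension count $\dim_{\F_q}\mathcal{C}_0=\dim_{\F_q}S$ is exact --- and verifying the norm-tower identity that aligns the two $\eta$-conditions when $\eta$ is taken in $\F_{q^{m_1}}$.
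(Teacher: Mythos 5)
Your proposal is correct and takes essentially the same approach as the paper: the first bullet is handled by re-running the rank/dimension/Singleton argument of Theorem~\ref{theorem. construction1}, and the second by observing that property P.2 forces $A\circ\ell$ to annihilate $W_1$, combined with the count $\dim_{\F_q}S=m_1(n-d+1)$ and the fact that deleting zero rows preserves rank. You are in fact somewhat more thorough than the paper's own (very terse) proof, which never explicitly argues the ``moreover'' claim; your verification of P.1--P.2 for $W_1=\Ker\Tr_{q^m/q^{m_1}}$ and $S=\F_{q^{m_1}}^{\,n-d+1}$, together with the norm-tower identity $\Norm_{q^m/q}(\eta)=\bigl(\Norm_{q^{m_1}/q}(\eta)\bigr)^{m/m_1}$ reconciling the two $\eta$-conditions, supplies exactly the detail the paper leaves implicit.
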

\begin{proof}
If $a_0, \ldots, a_{n-d} \in \F_{q^m}$, not all zero, then the same arguments of the proof of Theorem \ref{theorem. construction1} imply that the rank of the map $A(x) \circ (a_0x + a_1 x^q + \cdots + a_{n-d} x^{q^{n-d}} + \eta a_0 x^{q^{n-d+1}})$ is at least $d$. Similarly we obtain that $\dim_{\F_q} \mathcal{C}(W,W_1,\eta)=m(n-d+1)$. Using Singleton's bound we conclude that $\mathcal{C}(W,W_1,\eta)$ is an MRD code of minimum rank distance $d$.

If $w_1 \in W_1$ and $(a_0, \ldots, a_{n-d}) \in S$, then using property P.2 of Construction \ref{construction2} we
obtain that $A(x) \circ (a_0w_1 + a_1 w_1^q + \cdots + a_{n-d} w_1^{q^{n-d}} + \eta a_0 w_1^{q^{n-d+1}})=0$. As $\dim_{\F_{q}} S=m_1(n-d+1)$ we complete the proof.
\end{proof}

Next we present an application of a slight modification of Construction~\ref{construction2}. The following result cannot be obtained using Construction~\ref{construction1} or the product construction of Section~\ref{s:prod} below. Moreover for any given integer $\mu$, it is a construction of an MRD code with the minimum rank distance $\mu$ having a suitable subcode, whose elements have zero rows at the end and the subcode is still MRD of with minimum rank distance $\mu$ after ignoring these zero rows.

\begin{proposition} \label{proposition.subtract-many}
Let $\mu \ge 1$ and $\ell \ge 2$ be integers. Put $n=\mu(\ell-1)$ and $m_1=\mu \ell$ so that $n<m$. Moreover put $\mu_1=\mu$.
Let $W \subseteq \F_{q^{\mu}}$. Let $W_1=\{w_1 \in \F_{q^m}: \Tr_{{q^m}/{q^{\mu}}}(w_1)=0\}$. Note that $W$, $W_1$ are $\F_q$-linear subspaces with $\dim_{\F_q} W=\mu$ and $\dim_{\F_q} W_1=m-\mu$. Let $S$ be the set defined as
\be \label{e1.proposition.subtract-many}
\begin{array}{r@{}c@{}l}
S &\mbox{}= \Big\{&  (a_0, a_1, \ldots, a_{\mu(\ell-2)}) \in \F_{q^m} \times \cdots \times \F_{q^m}  \\
& & \hspace{0.2cm} a_i=0 \; \mbox{if $\mu \nmid i$}, \\
& & \hspace{0.2cm} a_0 \in W_1, \\
& & \hspace{0.2cm} a_{\mu}=a_0 + a_0^{q^\mu}, \\
& & \hspace{0.2cm} a_{2\mu}=a_0 + a_0^{q^\mu} + a_0^{q^{2\mu}}, \\
& & \hspace{1cm}\vdots \\
& &   \hspace{0.2cm} a_{\mu(\ell-2)}=a_0 + a_0^{q^\mu} + a_0^{q^{2\mu}} + \cdots + a_0^{q^{(\ell -2)\mu}}\Big\}.
\end{array}
\ee
Let
$$
\mathcal{P}(W)=\{A(x) \circ (a_0x + a_1 x^q + \cdots + a_{\mu(\ell-2)} x^{q^{\mu(\ell-2)}} : a_0,a_1, \ldots, a_{\mu(\ell-2)} \in \F_{q^m}\} .
$$
Let $\mathcal{P}_0(W,W_1,S)$
be the $\F_q$-linear subspace of $\mathcal{P}(W)$ defined as
$$
\mathcal{P}_0(W,W_1,S)=\{A(x) \circ (a_0x + a_1 x^q + \cdots + a_{\mu(\ell-2)} x^{q^{\mu(\ell-2)}}: (a_0,a_1, \ldots, a_{\mu(\ell-2)}) \in S\}.
$$
We choose an ordered basis $(e_1, \ldots, e_m)$ of $\F_{q^m}$ over $\F_q$ such that the last $m-m_1$ entries form a basis of $W_1$ over $\F_q$.  We choose an ordered basis $(f_1, \ldots, f_n)$ of $\Ima(A)$ over $\F_q$.

For each $L \in \mathcal{P}(W)$, let $M(L)$ denote the $m\times n$ matrix representing the evaluation map $L: \F_{q^m} \to \Ima(A)$ defined as $x \mapsto L(x)$ with respect to the bases $(e_1, \ldots, e_m)$ and $(f_1, \ldots, f_n)$.
Let $\mathcal{C}(W)$ be the $\F_q$-linear subspace of $\F_{q^{m \times n}}$ consisting of $M(L)$ as $L$ runs through
$\mathcal{P}(W)$. Similarly let $\mathcal{C}_0(W,W_1,S)$ be the $\F_q$-linear subspace of $\mathcal{C}(W)$ consisting of $M(L)$ as $L$ runs through
$\mathcal{P}_0(W,W_1,S)$.

Then we have the followings:
\begin{itemize}
\item $\mathcal{C}(W)$ is an MRD code of minimum rank distance $\mu$. \smallskip
\item  If $M \in \mathcal{C}_0(W,W_1,S)$, then the last $m-m_1$ rows of $M$ are the zero rows. Considering $\mathcal{C}_0(W,W_1,S)$ in $\F_q^{m_1 \times n}$ by ignoring the last $m-m_1$ zero rows, we obtain an MRD code of minimum rank distance $\mu$.
\end{itemize}
Note that $\mathcal{C}(W)$ is in $\F_q^{m \times n}$  with $m > n$, while the subcode $\mathcal{C}_0(W,W_1,S)$ is considered in $\F_q^{m_1 \times n}$ with $m_1< n$. This property is different from the earlier results in this section.
\end{proposition}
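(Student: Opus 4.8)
```latex
\noindent\textbf{Proof proposal.}\quad
The plan is to verify that this proposition is a special case of
Construction~\ref{construction2} (and Theorem~\ref{theorem. construction2}),
so that the two displayed items follow immediately once the hypotheses
\textbf{P.1} and \textbf{P.2} are checked. First I would set $d=\mu$, so that
$n-d+1 = \mu(\ell-1)-\mu+1 = \mu(\ell-2)+1$; this matches the index range
$a_0,\ldots,a_{\mu(\ell-2)}$ appearing in $\mathcal{P}(W)$ and makes
$\mathcal{P}(W)$ the code of Construction~\ref{construction2} with the twist
parameter $\eta$ specialized to $0$ (note the proposition's definition of
$\mathcal{P}(W)$ has no $\eta a_0 x^{q^{n-d+1}}$ term). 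One must confirm that the
norm condition on $\eta$ degenerates acceptably when $\eta=0$ and $d=\mu$, i.e.\
that the rank/MRD argument of Theorem~\ref{theorem. construction1} still gives
minimum distance $\mu$ for the untwisted punctured code; this yields the first
bullet, $\mathcal{C}(W)$ is MRD of distance $\mu$.

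The substance is the verification of \textbf{P.1} and \textbf{P.2} for the set
$S$ defined in~(\ref{e1.proposition.subtract-many}). For \textbf{P.1} I would
count the free parameters of $S$: the only free element is $a_0\in W_1$, which
contributes $\dim_{\F_q}W_1 = m-\mu$ dimensions, while every other coordinate is
either forced to $0$ (when $\mu\nmid i$) or determined by $a_0$ through the
partial-trace relations $a_{j\mu}=\sum_{t=0}^{j} a_0^{q^{t\mu}}$. Hence
$\dim_{\F_q}S = m-\mu = m_1(n-d+1)$ should hold because with $m_1=\mu\ell$,
$n-d+1=\mu(\ell-2)+1$ the product $m_1(n-d+1)$ must be reconciled with $m-\mu$;
I would pin down the relation between $m$, $m_1$ and the other data here, since
the stated $m_1=\mu\ell$ together with $m_1<n=\mu(\ell-1)$ is internally
inconsistent and the correct reading is presumably $m=\mu\ell$, $m_1=\mu$, so
that $m-\mu = \mu(\ell-1)\cdot\mu/\mu = m_1(n-d+1)$ after substituting
$m_1=\mu_1=\mu$; sorting out this indexing is the first thing to fix.

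The key computation is \textbf{P.2}: for $(a_0,\ldots,a_{\mu(\ell-2)})\in S$ and
$w_1\in W_1$ one must show
$\sum_{i} a_i w_1^{q^i} \in W$. Because $a_i=0$ unless $\mu\mid i$, the sum
collapses to $\sum_{j=0}^{\ell-2} a_{j\mu} w_1^{q^{j\mu}}$, and substituting the
telescoping definition $a_{j\mu}=\sum_{t=0}^{j}a_0^{q^{t\mu}}$ should, after
interchanging the order of summation and using the Frobenius identity
$(a_0 w_1^{q^{s\mu}})^{q^{t\mu}} = a_0^{q^{t\mu}} w_1^{q^{(s+t)\mu}}$, reorganize
into an expression recognizable as $\Norm$- or $\Tr_{q^m/q^\mu}$-type data
applied to $a_0 w_1$; here the defining property $\Tr_{q^m/q^\mu}(w_1)=0$ of
$W_1$ is exactly what forces the result to land in the $\mu$-dimensional space
$W\subseteq\F_{q^\mu}$. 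I expect this telescoping/Frobenius rearrangement to be
the main obstacle, as it is the only nontrivial algebraic identity; once it is
established, \textbf{P.2} holds and Theorem~\ref{theorem. construction2} delivers
both the zero-row structure of $\mathcal{C}_0$ and its MRD-ness of distance $\mu$
in $\F_q^{m_1\times n}$, giving the second bullet. The final remark, that here
$m_1<n$ whereas the ambient code lives in $\F_q^{m\times n}$ with $m>n$, is then
just an observation about the resulting parameters and needs no separate proof.
```
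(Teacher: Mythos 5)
Your instinct about where the real work lies is correct: the telescoping computation you describe for \textbf{P.2} is exactly the paper's central identity, namely that for $(a_0,\ldots,a_{\mu(\ell-2)})\in S$,
\[
\bigl(x^{q^{\mu}}-x\bigr)\circ\bigl(a_0x + a_{\mu}x^{q^{\mu}} + \cdots + a_{(\ell-2)\mu}x^{q^{(\ell-2)\mu}}\bigr)
= -a_0\bigl(x + x^{q^{\mu}} + \cdots + x^{q^{(\ell-1)\mu}}\bigr)
= -a_0\,\Tr_{q^m/q^{\mu}}(x),
\]
so that every $L\in\mathcal{P}_0(W,W_1,S)$ satisfies $A(L(w_1))=0$, i.e.\ $L(w_1)\in W$, for all $w_1\in W_1$ (note $\dim_{\F_q}W=\mu$ and $W\subseteq\F_{q^{\mu}}$ force $W=\F_{q^{\mu}}$ and $A(x)=x^{q^{\mu}}-x$). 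You also read the typos correctly: the intended parameters are $m=\mu\ell$ and $m_1=\mu$.

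However, your overall strategy --- verify \textbf{P.1} and \textbf{P.2} and then invoke Theorem~\ref{theorem. construction2} --- has a genuine gap, and the paper itself flags it: the proposition is ``an application of a slight modification'' of Construction~\ref{construction2}, not an instance of it. Construction~\ref{construction2} requires $2\le d\le n\le m_1<m$, whereas here $m_1=\mu<n=\mu(\ell-1)$ as soon as $\ell\ge 3$; this $m_1<n$ regime is precisely the advertised novelty of the proposition. Concretely, your verification of \textbf{P.1} fails: $\dim_{\F_q}S=\dim_{\F_q}W_1=m-\mu=\mu(\ell-1)$, while $m_1(n-d+1)=\mu\bigl(\mu(\ell-2)+1\bigr)$, and these agree only when $\mu=1$ or $\ell=2$; your chain ``$m-\mu=\mu(\ell-1)\cdot\mu/\mu=m_1(n-d+1)$'' is an algebra slip masking this mismatch. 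The correct bookkeeping, and the reason the statement nevertheless holds, is that for $m_1<n$ the Singleton/MRD dimension in $\F_q^{m_1\times n}$ at distance $d=\mu$ is $n(m_1-d+1)=n$, not $m_1(n-d+1)$, and $n$ is exactly what $\dim_{\F_q}S$ equals. So after proving \textbf{P.2} you cannot cite Theorem~\ref{theorem. construction2}; you must redo its concluding Singleton step with the transposed formula: the subcode has dimension $n(m_1-\mu+1)$, its nonzero elements inherit rank at least $\mu$ from the ambient code (deleting zero rows does not change rank), hence it is MRD in $\F_q^{m_1\times n}$. This is what the paper's proof does directly. The first bullet is essentially unaffected: there your reduction to the untwisted case $\eta=0$ with the rank and dimension argument of Theorem~\ref{theorem. construction1} is sound, provided you present it as repeating that argument rather than as a formal citation of a theorem whose hypotheses are not met.
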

\begin{proof}
As $m > n$, an MRD code of rank $\mu$ in $\F_q^{m \times n}$ has $\F_q$-dimension $m(n\mu+1)$. This is the same as $\dim_{\F_q} \mathcal{P}(W)$ as $a_i \in \F_{q^m}$ in the definition and $\mathcal{P}(W)$ and the number of such coefficients is $1+(\ell -2)=1+ (n-\mu)$.

As $m_1< n$, an MRD code of rank $\mu$ in $\F_q^{m_1 \times n}$ has $\F_q$-dimension $n(m_1-\mu+1)=n=\mu(\ell-1)$. This is the same as $\dim_{\F_q} W_1$. Moreover if $(a_0, a_1, \ldots, a_{\mu(\ell-2)}) \in S$, then we have
\begin{multline} \label{ep1.proposition.subtract-many}
\big(x^{q^\mu}-x \big) \circ \Big( a_0x + a_{\mu} x^{q^{\mu}} + a_{2\mu} x^{q^{2\mu}} + \cdots + a_{(\ell-2)\mu} x^{q^{(\ell-2)\mu}}\Big ) \\
=-a_0 \big(x + x^{q^{\mu}} + \cdots + x^{q^{(\ell-1)\mu}}\big),
\end{multline}
where we use the properties of the coefficients given in (\ref{e1.proposition.subtract-many}). Hence evaluating the function in (\ref{ep1.proposition.subtract-many}) at $w_1 \in W_1$ we obtain $0$. This completes the proof.
\end{proof}

\section{Product construction}\label{s:prod}

In this section, we describe the direct product construction for MRD codes.
It is an analog of a similar construction for MDS codes (in particular, for latin squares),
which constructs an MDS code over a $q'q''$-ary alphabet
from MDS codes of the same length and distance over a $q'$-ary alphabet
and a $q''$-ary alphabet.
Actually, MRD codes in $B_q(m,n)$, $n\le m$, form a subclass of $q^m$-ary MDS codes
if we treat columns of the matrices as symbols of the corresponding $q^m$-ary alphabet.
It happens that the direct product of two such codes
(in the manner that corresponds to the direct product of the alphabets)
keeps not only the MDS property, but also the MRD one.

\begin{remark}
The product construction we consider is also known as vertical interleaving;
for MRD codes, it was firstly suggested in 
\cite{LoidreauOverbeck:2006}.
The most general form was considered in~\cite{SLK:2020:Interleaved},
where the height of the matrices of the component codes are not required to be equal.
\end{remark}

For two codes $C'$ and $C''$ in $B_q(m',n)$ and $B_q(m'',n)$,
respectively, $n\le m'$, $n\le m''$,
we denote
$$
C' \times C'' = \left\{ \genfrac{[}{]}{0pt}{0}{X'}{X''}:\ X'\in C',\ X''\in C'' \right\} .
$$
\begin{proposition}[direct product]\label{p:product}
 If $C'$ is a distance-$d$ MRD code in $B_q(m',n)$, $n\le m'$,
 and $C''$ is a distance-$d$ MRD code in $B_q(m'',n)$, $n\le m''$,
 then $C = C' \times C''$ is a distance-$d$ MRD code in $B_q(m'+m'',n)$.
 If, additionally, $C'$ and $C''$ are linear (over $\FF_q$), then  $C' \times C''$ is linear too.
\end{proposition}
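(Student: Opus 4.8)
The plan is to avoid verifying the anticode condition directly and instead invoke the characterization recorded right after the definition of MRD codes: every distance-$d$ code in $B_q(m'+m'',n)$ of cardinality $q^{(m'+m'')(n-d+1)}$ is automatically MRD. So it suffices to check two things about $C=C'\times C''$, namely that its cardinality equals $q^{(m'+m'')(n-d+1)}$ and that its minimum rank distance is at least $d$; linearity is then a separate, purely formal matter.

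The cardinality is immediate. Since $n\le m'$ and $n\le m''$, the component MRD codes satisfy $|C'|=q^{m'(n-d+1)}$ and $|C''|=q^{m''(n-d+1)}$, and distinct pairs $(X',X'')$ yield distinct stacked matrices, so $|C|=|C'|\cdot|C''|=q^{(m'+m'')(n-d+1)}$. As $n\le m'\le m'+m''$, this is exactly the MRD cardinality for $B_q(m'+m'',n)$.

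The main step is the distance bound, and its crux is an elementary rank inequality for vertically stacked matrices. For an $m'\times n$ matrix $A$ and an $m''\times n$ matrix $B$, the rows of $\genfrac{[}{]}{0pt}{0}{A}{B}$ are the rows of $A$ together with the rows of $B$, all lying in $\F_q^n$, so $\rowspace\genfrac{[}{]}{0pt}{0}{A}{B}=\rowspace A+\rowspace B$ and therefore $\rank\genfrac{[}{]}{0pt}{0}{A}{B}\ge\max(\rank A,\rank B)$. I would apply this to the difference of two distinct codewords $\genfrac{[}{]}{0pt}{0}{X'_1}{X''_1}$ and $\genfrac{[}{]}{0pt}{0}{X'_2}{X''_2}$: at least one of $X'_1\neq X'_2$ or $X''_1\neq X''_2$ holds, and whichever component differs contributes rank at least $d$ by the MRD (hence distance-$d$) property of $C'$ or $C''$; by the inequality the full difference then has rank at least $d$. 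Thus $C$ is a distance-$d$ code of cardinality $q^{(m'+m'')(n-d+1)}$, and by the characterization above $C$ is MRD.

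The linearity claim is routine: if $C'$ and $C''$ are $\F_q$-linear, then for $\lambda,\mu\in\F_q$ one has $\lambda\genfrac{[}{]}{0pt}{0}{X'_1}{X''_1}+\mu\genfrac{[}{]}{0pt}{0}{X'_2}{X''_2}=\genfrac{[}{]}{0pt}{0}{\lambda X'_1+\mu X'_2}{\lambda X''_1+\mu X''_2}$, whose two blocks again lie in $C'$ and $C''$, so $C$ is closed under $\F_q$-linear combinations. I do not expect a genuine obstacle here: the only content is the stacking-rank inequality, and the rest is bookkeeping. The one point worth keeping in mind is the case where one block of the difference vanishes, which is precisely what the ``$\max$'' in the inequality takes care of.
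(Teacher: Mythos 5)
Your proof is correct and takes essentially the same route as the paper's: the paper likewise argues by cases on which block of the difference of two codewords is nonzero, using (implicitly) the fact that stacking cannot decrease rank, and relies on the cardinality count together with the remark that every distance-$d$ code of cardinality $q^{m(n-d+1)}$ is MRD. You merely make explicit two steps the paper leaves tacit --- the row-space justification of the inequality $\rank\genfrac{[}{]}{0pt}{0}{A}{B}\ge\max(\rank A,\rank B)$ and the computation $|C|=q^{(m'+m'')(n-d+1)}$ --- so there is no substantive difference.
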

\begin{proof}
 If $X=\genfrac{[}{]}{0pt}{1}{X'}{X''}$ and $Y=\genfrac{[}{]}{0pt}{1}{Y'}{Y''}$ are two different codewords
 of  $C$, then $X'\ne Y'$ or  $X''\ne Y''$.
 If $X''\ne Y''$, then $\rank(Y''-X'')\ge d$
 and hence $\rank(Y-X)\ge d$, which confirms the code distance of $C$.
  If $X''= Y''$, then $X'\ne Y'$, $\rank(Y'-X')\ge d$,
 and again $\rank(Y-X)\ge d$. The linearity is straightforward.
 \end{proof}
For the MRD code $C$ constructed as above, we observe the following.
For given $X''$ in $C''$, the set $S$ of matrices
whose last $m''$ rows coincide with $X''$ satisfy the hypothesis
of Theorem~\ref{th:swiMRD}. I.e., $S\cap C$ is MRD in $B_q(m',q)$.
Such MRD subcode (actually, it is a coset of $C'$) can be switched independently
for each $X''$. This results in the following switching variation
of the product construction.
\begin{theorem}\label{th:product}
 Let $m'$, $m''$, $n$, $d$ satisfy $1\le d\le n\le m'$, $n\le m''$.
 Let $C''$ be a distance-$d$ MRD code in $B_q(m'',n)$,
 and for every $X''$ in $C''$ let
 $C'_{X''}$ be a distance-$d$ MRD code in $B_q(m',n)$.
 Then the code $C$ defined as
 $$C = \left\{ \genfrac{[}{]}{0pt}{0}{X'}{X''}:\ X''\in C',\ X'\in C'_{X''} \right\}$$
 is a distance-$d$ MRD code in $B_q(m'+m'',n)$.
 \end{theorem}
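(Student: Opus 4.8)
The plan is to prove Theorem~\ref{th:product} by reducing it to a repeated application of the switching theorem, Theorem~\ref{th:swiMRD}, rather than re-deriving the MRD property from scratch via anticodes. The key observation is that the code $C$ in the statement differs from the plain direct product $C' \times C''$ (for a fixed reference choice $C'$) only in which coset of $C'$ is placed above each $X'' \in C''$, and each such replacement is exactly one instance of switching.

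First I would fix an arbitrary distance-$d$ MRD code $C'$ in $B_q(m',n)$ (for instance, a Gabidulin code, which exists) and form the base code $C_0 = C' \times C''$, which is a distance-$d$ MRD code in $B_q(m'+m'',n)$ by Proposition~\ref{p:product}. Next I would enumerate the codewords of $C''$ as $X''_1, \ldots, X''_N$ with $N = |C''| = q^{m''(n-d+1)}$, and build the target code $C$ from $C_0$ by a finite sequence of switches, one per index $j$. At step $j$, I let $S_j$ denote the set of matrices in $B_q(m'+m'',n)$ whose last $m''$ rows equal $X''_j$; as noted in the paragraph preceding the theorem, $S_j$ is (up to the translation sending $X''_j$ to the fixed rows, which lies in $\Aut(B_q(m'+m'',n))$) of the form required by Theorem~\ref{th:swiMRD}, with $m = m'+m''$. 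The intersection of the current code with $S_j$ is the coset $\{[\begin{smallmatrix}X'\\X''_j\end{smallmatrix}] : X' \in C'\}$, which is a copy of the MRD code $C'$ and hence MRD in $B_q(m',n)$, so the hypothesis $|S_j \cap C| = q^{m'(n-d+1)}$ of Theorem~\ref{th:swiMRD} holds. I then switch this subcode to the coset determined by $C'_{X''_j}$, which is likewise MRD in $B_q(m',n)$, and Theorem~\ref{th:swiMRD} guarantees the result is again a distance-$d$ MRD code.

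The main thing to verify carefully is that the switches at different indices $j$ do not interfere: the sets $S_1, \ldots, S_N$ are pairwise disjoint, since two matrices with different bottom blocks $X''_i \ne X''_j$ cannot both satisfy the row constraints. Consequently each switch modifies $C$ only within its own $S_j$ and leaves the portions over the other $X''_i$ untouched, so after all $N$ switches the code is precisely the stated $C = \{[\begin{smallmatrix}X'\\X''\end{smallmatrix}] : X'' \in C'',\ X' \in C'_{X''}\}$. Because each intermediate code is MRD by Theorem~\ref{th:swiMRD}, the final code is MRD as well, completing the argument.

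The step I expect to be the mild obstacle is the bookkeeping needed to apply Theorem~\ref{th:swiMRD} cleanly: the theorem is stated for $S$ with the last $m-m'$ rows fixed to $0$, whereas here the bottom block is fixed to a nonzero $X''_j$. This is resolved by conjugating with the translation in $\Aut(B_q(m'+m'',n))$ that sends the block $X''_j$ to $0$ — switching commutes with isometries — or, equivalently, by simply observing that the proof of Theorem~\ref{th:swiMRD} uses only that $S$ is a coset-type subspace isometric to $B_q(m',n)$ via Lemma~\ref{l:2}, which holds for any fixed value of the last $m''$ rows, not just $0$. Everything else is a routine disjointness check, so no genuinely hard analytic or algebraic difficulty arises; the content is entirely in recognizing the product-then-switch structure.
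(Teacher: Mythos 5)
Your proof is correct, but it follows a genuinely different route from the paper's. The paper's proof of Theorem~\ref{th:product} is the direct one: it repeats the two-case rank computation of Proposition~\ref{p:product}, noting only that when $X''=Y''$ the top blocks $X'$ and $Y'$ are distinct elements of the \emph{same} MRD code $C'_{X''}$, so $\rank(Y-X)\ge\rank(Y'-X')\ge d$, while if $X''\ne Y''$ then $\rank(Y-X)\ge\rank(Y''-X'')\ge d$; together with the cardinality $|C|=q^{(m'+m'')(n-d+1)}$ this yields the MRD property with no reference to switching at all. You instead formalize the remark preceding the theorem: start from a plain product $C'\times C''$ (Proposition~\ref{p:product}) and apply Theorem~\ref{th:swiMRD} once per codeword $X''_j\in C''$, using the pairwise disjointness of the sets $S_j$ to see that the switches do not interfere and that the hypothesis $|S_j\cap C_{j-1}|=q^{m'(n-d+1)}$ persists at every step. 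Your handling of the one delicate point is also right: Lemma~\ref{l:2} and the proof of Theorem~\ref{th:swiMRD} never use that the fixed rows are zero, and alternatively one can conjugate by a translation, which is an isometry preserving all code parameters. Comparing what each approach buys: the paper's argument is shorter and self-contained (a two-line rank estimate), whereas yours exhibits the theorem as literally iterated switching, which is conceptually faithful to the paper's main technique but costs the anticode machinery behind Theorem~\ref{th:swiMRD}, a base MRD code $C'$ (you could simply take $C':=C'_{X''_1}$ rather than invoke the existence of Gabidulin codes), and an induction over the $q^{m''(n-d+1)}$ codewords of $C''$.
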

The proof is the same as for Proposition~\ref{p:product}, taking into account that
$X'$ and $Y'$ belong to the same code $C'_{X''}$ if $X''=Y''$.
Another interesting corollary of the product construction is constructing new linear MRD codes.
The following fact is straightforward from
Proposition~\ref{proposistion.new.Wedderburn type}.
\begin{corollary}\label{c:product}
Assume, under the notation and the hypothesis of Proposition~\ref{p:product},
that $C'$ and $C''$ are $\FF_q$-linear, $m'=n=d$,
and $m'$ does not divide $m''$.
Then the code $C=C'\times C''$
is not equivalent to any Gabidulin or twisted Gabidulin code.
\end{corollary}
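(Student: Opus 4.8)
The plan is to establish the inequivalence of $C = C' \times C''$ by combining the product's structure with the obstruction already isolated in Proposition~\ref{proposistion.new.Wedderburn type}. The key observation is that the product code $C$ in $B_q(n+m'',n)$ (recall $m'=n=d$) admits, by the discussion preceding Theorem~\ref{th:product}, an MRD subcode supported on the first $m'=n$ rows: for the zero codeword $X''=0$ of $C''$, the set $S$ of matrices whose last $m''$ rows vanish meets $C$ in a copy of $C'$, which is MRD in $B_q(n,n)$. Since $C'$ is $\FF_q$-linear of dimension $n=d$ and lives in the top $m'=n=m_1$ rows, this is precisely the sort of ``MRD subcode with trailing zero rows, still MRD after deletion'' that Proposition~\ref{proposistion.new.Wedderburn type} forbids when $m_1 \nmid m$. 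Here $m_1 = n = m'$ and the ambient height is $m = n + m''$, so the divisibility hypothesis $m' \nmid m''$ translates into $m_1 \nmid m$, exactly the failing case.

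First I would show that any code equivalent to a Gabidulin or twisted Gabidulin code of these parameters \emph{cannot} possess such a subcode. The cleanest route is contrapositive: assume $C$ were equivalent to a (twisted) Gabidulin code $\tilde{C}$ in $B_q(m,n)$ with $m=n+m''$, $m_1=n=d$. Any isometry from $\Aut(B_q(m,n))$ carries the MRD subcode of $C$ to an MRD subcode of $\tilde{C}$ whose codewords vanish on some $m-m_1$ coordinates of the row space (after the $\GL(m)$ part of the isometry, a set of $m-m_1$ rows). But that is exactly the configuration analysed in the second item of Proposition~\ref{proposistion.new.Wedderburn type}, whose proof shows — via the Wedderburn argument forcing an intermediate field $\bar S$ with $m_1 = \dim_{\FF_q} \bar S$ and hence $m_1 \mid m$ — that no such subcode exists when $m_1 \nmid m$. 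Thus the mere existence of the subcode of $C$ contradicts equivalence to any (twisted) Gabidulin code.

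The step I expect to require the most care is checking that the subcode of $C$ genuinely satisfies \emph{all} the structural hypotheses invoked in Proposition~\ref{proposistion.new.Wedderburn type}, not merely the numerical ones. In particular one must confirm that the twisted-Gabidulin side of the argument applies to the image $\tilde C$ under a general isometry: the $\GL(m)$, $\GL(n)$, translation, and Frobenius components must each be seen to preserve both the MRD property and the ``vanishing rows'' structure up to the normal forms used in the proposition. The translation part is handled by passing to the linear code through the origin (both $C$ and $\tilde C$ are $\FF_q$-linear, so $0$ is a codeword), and the $\GL(n)$ column action only permutes the representation $x \mapsto L(x)$ of the defining polynomials; the genuinely substantive point is that the row-space $\GL(m)$ action lets us assume the vanishing coordinates are the last $m-m_1$ rows, matching the setup of the proposition verbatim.

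Finally I would assemble the pieces: the product code $C = C' \times C''$ has the forbidden subcode by construction, while every (twisted) Gabidulin code of matching parameters lacks it by Proposition~\ref{proposistion.new.Wedderburn type}; since equivalence preserves the existence of such a subcode, $C$ is inequivalent to all of them. I would flag that the hypothesis $m'=n=d$ is essential — it is what reduces the first component $C'$ to an $n\times n$ MRD code of full distance, the regime in which the Wedderburn obstruction bites — and that $m' \nmid m''$ is exactly what yields $m_1 \nmid m$.
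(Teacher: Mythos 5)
Your proposal is correct and takes essentially the same route as the paper's own (very terse) proof: the product code visibly contains an $n\times n$ MRD subcode (the copy of $C'$ sitting over $X''=0$), Proposition~\ref{proposistion.new.Wedderburn type} shows that no code in the (twisted) Gabidulin class with $m_1=n=d$ and $m_1\nmid m$ admits such a subcode under any choice of basis, and the existence of such a subcode is preserved by equivalence. Your write-up merely makes explicit the equivalence-invariance bookkeeping (translations, $\GL(m)$, $\GL(n)$, Frobenius) that the paper leaves implicit.
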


\begin{proof}
By construction, the product MRD code has an MRD subcode
consisting of $n$-by-$n$ matrices. The codes considered in~Proposition~\ref{proposistion.new.Wedderburn type}
have no such subcodes.
\end{proof}

\section{On the number of inequivalent codes}\label{s:no}

If the hypothesis of Theorem~\ref{th:swiMRD}
is satisfied
(which is true for twisted Gabidulin codes
if $m'$ divides $m$ and for product MRD codes
for any parameters satisfying $2\le d\le n\le m'\le m-n$), then
the code $C$ is partitioned into
$q^{(n-d+1)(m-m')}$ cosets of $S\cap C$.
Each of these cosets can be independently
replaced to a coset of $R$
(or any other MRD code with the same parameters),
resulting in a new
MRD code with the the same parameters as $C$.
So, we can obtain more than
$$
2^{q^{(n-d+1)(m-m')}}
$$
different MRD codes.
To evaluate the number of inequivalent codes,
we can divide this number by the maximum size of an equivalence class,
which is not more than the number of automorphisms of $B_q(m,n)$.
So, we have at least
$$
\frac{2^{q^{(n-d+1)(m-m')}}}{
     |\mathrm{GL}(m)|
\cdot|\mathrm{GL}(n)|
\cdot|\mathbb F_q^{mn+}|
\cdot|\mathrm{Aut}(\mathbb{F}_q)|}
\ge \frac{2^{q^{(n-d+1)(m-m')}}}{
 q^{m^2} \cdot q^{n^2} \cdot q^{mn} \cdot \log q
} =
2^{q^{(n-d+1)m - o(m)}}
$$
inequivalent MRD codes
as $n$ and $m'$ are fixed and $m\to\infty$.
Of course, most of them are non-linear.
\begin{corollary}\label{number}
 If $d \le n \le m$,
 then there are at least $2^{q^{(n-d+1-o(1))m}}$
 inequivalent MRD codes of distance $d$
 in $B_q(m,n)$ as $m\to\infty$.
\end{corollary}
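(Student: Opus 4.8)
The plan is to make rigorous the informal counting argument that precedes the statement, using the product switching construction of Theorem~\ref{th:product} to produce exponentially many mutually independent switchable subcodes. Since $n$, $q$, and $d$ are fixed while $m\to\infty$, I may assume $m\ge 2n$; I also take $d\ge 2$ (for $d=1$ the full space is the unique distance-$1$ MRD code). First I would set $m'=n$ and $m''=m-n$, both at least $n$, fix a Gabidulin MRD code $C'$ of distance $d$ in $B_q(n,n)$ and a Gabidulin MRD code $C''$ of distance $d$ in $B_q(m-n,n)$, and invoke Theorem~\ref{th:product}: for each $X''\in C''$ one may choose a distance-$d$ MRD code $C'_{X''}$ in $B_q(n,n)$ independently, and the resulting
$$C=\left\{\genfrac{[}{]}{0pt}{0}{X'}{X''}:\ X''\in C'',\ X'\in C'_{X''}\right\}$$
is a distance-$d$ MRD code in $B_q(m,n)$.

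Next I would count the codes obtained. Restricting to translates $C'_{X''}=C'+M_{X''}$ of the base code already gives at least $q^{n(d-1)}\ge 2$ distinct admissible choices per $X''$, namely the cosets of $C'$ in $B_q(n,n)$. The blocks $S_{X''}$ of matrices whose bottom $m''$ rows equal $X''$ are pairwise disjoint, so the codewords of $C$ with a given bottom block $X''$ are exactly $\{[X';X'']:X'\in C'+M_{X''}\}$; hence two assignments $(M_{X''})$ and $(M'_{X''})$ produce the same code precisely when $M_{X''}\equiv M'_{X''}\pmod{C'}$ for every $X''$. As $|C''|=q^{(n-d+1)(m-n)}$, this yields at least
$$2^{\,q^{(n-d+1)(m-n)}}$$
distinct distance-$d$ MRD codes in $B_q(m,n)$.

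To convert a count of codes into a count of equivalence classes, I would divide by the maximal size of an equivalence class, which is bounded by $|\Aut(B_q(m,n))|\le |\GL(m)|\cdot|\GL(n)|\cdot|\F_q^{mn+}|\cdot|\Aut(\F_q)|\le q^{m^2}\cdot q^{n^2}\cdot q^{mn}\cdot\log q$ as recorded in Section~2. Thus the number of inequivalent codes is at least
$$\frac{2^{\,q^{(n-d+1)(m-n)}}}{q^{m^2+n^2+mn}\log q}=2^{\,q^{(n-d+1)(m-n)}-O(m^2)}.$$
Setting $E=q^{(n-d+1)(m-n)}=q^{(n-d+1)m-O(1)}$, the correction $O(m^2)$ is only polynomial in $m$ while $E$ is exponential, so $E-O(m^2)=q^{(n-d+1-o(1))m}$, giving the claimed bound $2^{q^{(n-d+1-o(1))m}}$.

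The MRD property of $C$ and the existence of the base Gabidulin codes are provided by Theorem~\ref{th:product} and the classical constructions, so these are routine. The two points demanding care are the distinctness of the switched codes, secured by the disjointness of the row-blocks $S_{X''}$, and the final asymptotic estimate; the latter is the main (though elementary) obstacle, since it is exactly the place where one must check that dividing by the merely $q^{O(m^2)}$ automorphisms does not erode the doubly exponential count.
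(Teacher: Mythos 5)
Your proposal is correct and follows essentially the same route as the paper: the paper likewise switches the $q^{(n-d+1)(m-m')}$ parallel MRD subcode blocks of a product (or divisibility-based twisted Gabidulin) code independently to obtain more than $2^{q^{(n-d+1)(m-m')}}$ distinct codes, and then divides by the same bound $q^{m^2}\cdot q^{n^2}\cdot q^{mn}\cdot\log q$ on $|\Aut(B_q(m,n))|$ before taking the same asymptotics. Your specific choice $m'=n$, the explicit distinctness check via disjointness of the row blocks, and the spelled-out estimate are just a more careful write-up of the paper's sketch, not a different method.
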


\section{Codes with different affine ranks and aperiodic codes}\label{s:rk}

With the switching approach, we can construct many MRD codes
with the same parameters, most of the codes being non-linear.
Two known numerical
characteristics of codes that show
how far a code is from being linear
are the affine rank and the kernel size,
see the definitions below.
In this section, we will show how to construct MRD codes with different affine ranks
and MRD codes with trivial kernel.

\subsection{Codes with different affine ranks}
The \emph{affine rank} of a code $C$ in a vector space
(in our case, the set of all $m$-by-$n$ matrices with elements from $\FF_q$,
forming an $mn$-dimensional vector space over $\FF_q$)
is the dimension of the affine span of $C$.
Note that this concept does not depend on the metric,
and the words `rank' has different meanings
in `affine rank' and `rank distance'.
If the code is linear, or a coset of a linear code,
then its affine rank coincides with its dimension.

\begin{theorem}[on the affine rank of MRD codes]
 If $1<d\le n \le  m/2$ and $q$ is a prime power,
 then in $B_q(m,n)$ there are
 distance-$d$ MRD codes of each affine rank from
 $m(n-d+1)$ to $m(n-d+1) + \rho$, where
$$
\rho = \max_{m' \in \{n,\ldots,m-n\} }
\min \{ m'(d-1),\ q^{(m-m')(n-d+1)}-(m-m')(n-d+1)-1 \}.
$$
\end{theorem}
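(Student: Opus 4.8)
The plan is to realize every prescribed affine rank by a single coset-shift switching applied to a linear product code, reducing the affine-rank computation to the rank of one explicit linear map. Fix a split $m=m'+m''$ with $m',m''\in\{n,\dots,m-n\}$; this range is nonempty precisely because $n\le m/2$. Let $C'$ and $C''$ be \emph{linear} distance-$d$ MRD codes in $B_q(m',n)$ and $B_q(m'',n)$, and write $V'=\F_q^{m'\times n}$, $V''=\F_q^{m''\times n}$, so the ambient space is $V'\oplus V''$. For any function $v\colon C''\to V'$ with $v(0)=0$ I would consider
$$
C_v=\Big\{\genfrac{[}{]}{0pt}{0}{y'+v(x'')}{x''}:\ x''\in C'',\ y'\in C'\Big\}.
$$
Each fiber $\{y'+v(x''):y'\in C'\}$ is a coset of $C'$, hence an MRD code in $B_q(m',n)$, so Theorem~\ref{th:product} (equivalently, independent coset switching as in Theorem~\ref{th:swiMRD}) guarantees that $C_v$ is a distance-$d$ MRD code in $B_q(m,n)$. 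Taking $v\equiv0$ recovers the linear code $C'\times C''$, whose affine rank equals its dimension $m(n-d+1)$, the minimal value in the statement.

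The next step is to compute the affine rank $\operatorname{arank}(C_v)$. Since $v(0)=0$, the zero matrix lies in $C_v$, so the affine span equals the linear span, and a short computation gives
$$
\Span(C_v)=\big(C'\oplus\{0\}\big)+\Span\Big\{\genfrac{[}{]}{0pt}{0}{v(x'')}{x''}:\ x''\in C''\Big\},
$$
because within each fiber the differences sweep out $C'\oplus\{0\}$. Reducing modulo $C'\oplus\{0\}$, setting $U=V'/C'$ (so $\dim U=m'n-m'(n-d+1)=m'(d-1)$), and letting $\bar v\colon C''\to U$ be the reduction of $v$, I would show
$$
\operatorname{arank}(C_v)=m(n-d+1)+e(v),\qquad e(v):=\dim\Big(\Span\{(\bar v(x''),x''):x''\in C''\}\cap U\Big),
$$
using that the projection of this span to $V''$ is all of $C''$. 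Thus $e(v)$ is exactly the excess over the minimal affine rank.

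It remains to determine the possible values of $e(v)$. Identifying a function $C''\to U$ with a linear map $\phi\colon\F_q^{C''}\to U$ sending the indicator $e_{x''}$ to $\bar v(x'')$, one checks that $\Span\{(\bar v(x''),x'')\}\cap U=\phi(\mathcal R)$, where $\mathcal R=\{c\in\F_q^{C''}:\sum_{x''}c_{x''}x''=0\}$ is the space of linear relations among the elements of $C''$. Since $C''$ is linear of dimension $m''(n-d+1)$ with $q^{m''(n-d+1)}$ elements, $\dim\mathcal R=q^{m''(n-d+1)}-m''(n-d+1)$; moreover the indicator $e_0$ of the zero codeword always lies in $\mathcal R$ and satisfies $\phi(e_0)=\bar v(0)=0$, whence $e(v)=\operatorname{rank}(\phi|_{\mathcal R})\le\dim\mathcal R-1=q^{m''(n-d+1)}-m''(n-d+1)-1$. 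Combined with the trivial bound $e(v)\le\dim U=m'(d-1)$, this gives $e(v)\le\min\{m'(d-1),\,q^{m''(n-d+1)}-m''(n-d+1)-1\}$. Conversely, for any $r$ below this minimum one constructs $\phi$ annihilating $e_0$ and mapping a chosen $r$-dimensional subspace of $\mathcal R$ isomorphically into $U$, realizing $e(v)=r$; this is a routine prescribed-rank construction on $\mathcal R/\langle e_0\rangle$.

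Finally, choosing the split $m'$ that attains the maximum defining $\rho$, the achievable excesses form the full contiguous interval $\{0,1,\dots,\rho\}$, producing distance-$d$ MRD codes of every affine rank from $m(n-d+1)$ to $m(n-d+1)+\rho$. The main obstacle I anticipate is pinning down the exact second bound in $\rho$: obtaining the $-1$ requires carefully verifying that the relation $e_0$ is always present in $\mathcal R$ and always killed by $\phi$, and that no further universal relation forces an additional drop, so that $\operatorname{rank}(\phi|_{\mathcal R})$ can indeed reach $\dim\mathcal R-1$ whenever $\dim U$ permits. The realizability of each intermediate value is then elementary, and optimizing over the split $m'$ is immediate.
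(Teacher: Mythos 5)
Your proof is correct, and it rests on the same construction as the paper's: take a linear product code $C'\times C''$ and switch each coset of $C'$ (one coset per codeword of $C''$) by a translation, which is legitimate by Theorem~\ref{th:product} (equivalently Theorem~\ref{th:swiMRD}). Where you genuinely diverge is in how the affine rank is controlled. The paper's argument is incremental: it reserves $(m-m')(n-d+1)+1$ unshifted cosets, chosen so that their union already spans $C'\times C''$, and then shifts $k$ further cosets by vectors $z_1,\ldots,z_k$ that are independent modulo the subcode, each contributing exactly $+1$; the two terms of $\rho$ appear as the number of leftover cosets and the codimension $m'(d-1)$ of the subcode inside $B_q(m',n)$. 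You instead prove an exact formula: after the harmless normalization $v(0)=0$, the affine rank equals $m(n-d+1)+\dim\phi(\mathcal{R})$, where $\mathcal{R}$ is the space of linear relations among the codewords of $C''$ and $\phi$ is the linear map induced by the reduced translation function $\bar v$; the bound then falls out as $\min\{\dim U,\ \dim\mathcal{R}-1\}$, the $-1$ forced by the relation $e_0$ (which $\phi$ must annihilate because $\bar v(0)=0$), and every intermediate value is realized by prescribing $\phi$ on a basis of $\mathcal{R}$ containing $e_0$ — a step that is indeed routine, as you anticipate. Your route buys a statement the paper does not make: for a fixed split, $\min\{m'(d-1),\,q^{(m-m')(n-d+1)}-(m-m')(n-d+1)-1\}$ is not merely achievable but is the \emph{exact} maximum excess obtainable by translation switchings of this product code, which makes rigorous the informal observation in Remark~\ref{r:rank} that each translation replacement adds at most $1$ to the affine rank. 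The hypotheses consumed are the same in both proofs ($n\le m/2$ so that a valid split $m'\in\{n,\ldots,m-n\}$ exists, plus linear Gabidulin component codes), so neither argument is more general in scope; yours is simply the cleaner bookkeeping.
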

\begin{proof}
For each $m'$ in the range $\{n,\ldots,m-n\}$,
 there is a product (see Section~\ref{s:prod})
 $\F_q$-linear distance-$d$ MRD code
 $C$ in $B_q(m,n)$ that is representable as the union of
 cosets of a
 $\F_q$-linear distance-$d$ MRD code
 $M$ in $B_q(m',n)$ (regarded as a subgraph of $B_q(m,n)$):
 \begin{equation}\label{eq:MMMM}
  C = \bigcup_{x\in P} (M+x),
 \end{equation}
 where $P$ is a set of representatives,
 $|P|=q^{(m-m')(n-d+1)}$,
 $\overline 0 \in P$.
 By Theorem~\ref{th:swiMRD},
 every instance of $M$ in~\eqref{eq:MMMM} can be replaced by another
 MRD code in $B_q(m',n)$, in particular, by a translation
 $M+y_x$ of $M$ with any vector $y_x$ from $B_q(m',n)$:
 \begin{equation}\label{eq:MMMMy}
  C' = \bigcup_{x\in P} (M+y_x+x).
 \end{equation}
 Our goal is to choose $y_x$ in such a way that
 $C'$ has the required affine rank.

 We now choose two collections of vectors. The first collection is a subset
 $P'$ of $P$ such that $|P'|=(m-m')(n-d+1)$ and the linear span
 $\langle M,P' \rangle$ coincides with $C$ (in particular,
 the affine span of $M\cup P'\cup\{\overline0\}$ is $C$).

 The second collection $Z=\{z_1,\ldots,z_{m'(d-1)}\}$
 of vectors such that the linear span
 $\langle M,Z \rangle$ exhausts the vertex set of $B_q(m',n)$
 (this is possible because $m'n = \dim(M)+m'(d-1)$).

 In $P\backslash P' \backslash \{\overline0\}$, we choose
 $k$ vectors $x_1$, \ldots, $x_k$, where $k\le \min(|P|-|P'|-1,{m'(d-1)})$
 and set $y_{x_i}:=z_i$, $i=1,\ldots,k$.
 For every other vector $x$ from $P\backslash\{x_1, \ldots, x_k\}$,
 we set $y_{x_i}:=\overline0$.
 We now see that the affine span of $C'$ includes the linear code $C$
 and, additionally, the vectors $z_1$, \ldots, $z_k$,
 each increasing the affine rank by $1$.
 So, the $\mathrm{affine\_rank}(C')=\dim(C)+k$,
 $k\in \{0,\ldots, \min(|P|-|P'|-1,{m'(d-1)})$,
 where
 $|P|-|P'|-1 = q^{(m-m')(n-d+1)} - (m-m')(n-d+1) -1$.
 It remains to note that we can choose the best result over all $m'$ in $\{n,\ldots,m-n\}$.
\end{proof}

\begin{remark}\label{r:rank}
The goal of this section was to demonstrate,
with reasonably simple arguments,
the possibility to construct MRD codes
of the same parameters with wide variety of affine ranks.
We are convinced that for most of parameters higher values of the affine rank can be achieved
with further developing the technique.
In particular, in the proof above,
we consider only a very special
variant of switching,
where each coset of the MRD subcode $M$
is replaced by its translation. As a result, every such replacement contributes at most $+1$ to the affine rank.
This approach can be improved if one replace a coset of $M$
by a coset of another MRD code $M'$ in the same $B_q(m',n)$ subgraph.
Depending on the dimension of the linear span
$\langle M \cup M' \rangle$, each such replacement can add more than $1$
to the affine rank of the resulting code.
With such generalized approach, the total addition $\rho$ to the affine rank is still bounded
by the value $(m-n)(d-1)$. Further increasing of this value can be hypothetically
related with switching MRD subcodes in ``non-parallel'' $B_q(m',m)$ subgraphs, see the brief discussion in the conclusion part of the paper.
\end{remark}

\subsection{Kernel}
The \emph{kernel} of a code is the set of its periods:
$$ \mathrm{ker}(C) = \{x :\, C+x=C \}.$$

For a linear code $C$,
it holds $\mathrm{ker}(C) = C$.
Inversely, if $\mathrm{ker}(C) = C$
then $C$ is additive (closed under addition),
but, for non-prime $q$,
is not necessarily linear.
If $\mathrm{ker}(C)$ consists of
only the zero vector,
then
the code is said to have \emph{trivial kernel},
or just to be \emph{aperiodic}.

\begin{lemma}\label{l:without}
 Let $M$ be a linear
 rank-distance-$d$ code
 in $B_q(m,n)$, $d>1$, $|M|>1$,
 and let $\bar x$ be a non-zero
 matrix in $B_q(m,n)$.
 Then there is a linear
 rank-distance-$d$ code $D$
 in $B_q(m,n)$ such that $|M|=|D|$
 and $\bar x \not \in D$.
 \end{lemma}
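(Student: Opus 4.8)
The statement asks: given a nonzero matrix $\bar x$ and a linear rank-distance-$d$ code $M$ with $d>1$ and $|M|>1$, produce a linear rank-distance-$d$ code $D$ of the same size avoiding $\bar x$. Let me think about what we have to work with.

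First, the constraints. $D$ must be linear (over $\F_q$), have the same minimum rank distance $d$, and the same cardinality $|M|=q^{\dim M}$ as $M$. Since $d>1$ and $|M|>1$, the code $M$ is a nontrivial linear subspace of $B_q(m,n)$ with minimum rank distance at least $2$. We need $\bar x\notin D$.

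The natural approach: transform $M$ by an isometry of $B_q(m,n)$ that fixes linearity and maps $M$ off $\bar x$.
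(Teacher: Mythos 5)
Your proposal correctly identifies the right strategy --- apply a linearity-preserving isometry of $B_q(m,n)$ that moves $M$ off $\bar x$ --- and this is indeed the approach the paper takes. But the proposal stops at the plan: it never exhibits such an isometry, nor proves that one exists. That existence claim is precisely the content of the lemma; it is not automatic. A generic ``there are many isometries, so some image of $M$ avoids $\bar x$'' argument would require a counting estimate you have not supplied (and such counting is delicate, since $\bar x$ could a priori lie in every isometric copy of $M$ unless one rules that out). So as written there is a genuine gap: the key construction and the verification that the image avoids $\bar x$ are both missing.

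For comparison, the paper closes this gap with a short, concrete trick. If $\bar x \notin M$, take $D=M$. Otherwise, let $i$ be the index of a nonzero row of $\bar x$ and $j$ any other row index, and let $\pi_{i,j}$ be the map that adds row $i$ to row $j$ (left multiplication by an elementary matrix). This is a linear isometry of $B_q(m,n)$, so $D:=\pi_{i,j}(M)$ is a linear code with the same cardinality and code distance. The punchline is where the hypothesis $d>1$ enters: $\pi_{i,j}(\bar x)$ lies in $D$ and is at rank distance exactly $1$ from $\bar x$ (their difference has a single nonzero row, namely row $j$, equal to the nonzero row $i$ of $\bar x$). Since $D$ has minimum distance $d>1$, it cannot contain both $\pi_{i,j}(\bar x)$ and $\bar x$; hence $\bar x\notin D$. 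Note that this distance argument is the essential idea your plan lacks: the isometry is not chosen to ``avoid'' $\bar x$ directly, but rather to place a codeword of $D$ too close to $\bar x$ for $\bar x$ itself to belong to $D$.
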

\begin{proof}
 If $\bar x \not \in D$, then we take
 $D=M$. Otherwise, let $i$ be the index
 of any non-zero row of $\bar x$
 and $j$ any other row index.
 Denote by $\pi_{i,j}$ the transformation
 of matrices that add the $i$th row
 to the $j$th row. Obviously,
 $\pi_{i,j}$ is an isometry of $B_q(m,n)$,
 and hence $D:=\pi_{i,j}(M)$ has the same
 code parameters as $M$.
 It remains to observe that the rank
 distance between
 $\bar x$ and $\pi_{i,j}(\bar x)$
 is $1$; hence $D$,
 containing $\pi_{i,j}(\bar x)$,
 does not contain $\bar x$.
\end{proof}

\begin{lemma}\label{l:cup}
 Let $M$ be a linear
 rank-distance-$d$ code
 in $B_q(m,n)$, $d>1$, $\dim(M)=r$.
 Then for every $k\in \{0,\ldots,r\}$,
 there is a collection
 $\{M=M_0,M_1,\ldots,M_k\}$
 of codes of the same cardinality
 and code distance such that
 $\dim(M_0\cap\ldots\cap M_k)\le r-k$.
 Moreover, if $k>0$, then $M_1 \not \in \{M_0,M_2,\ldots,M_k\}$.
 \end{lemma}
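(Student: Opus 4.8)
The plan is to build the codes $M_1,\ldots,M_k$ one at a time, each step shrinking the running intersection by at least one dimension, with Lemma~\ref{l:without} as the only tool. Set $M_0:=M$ and, for $1\le j\le k$, write $N_j:=M_0\cap M_1\cap\cdots\cap M_j$ for the partial intersection, with $N_0:=M$. Since every code produced will be linear, each $N_j$ is a linear subspace and $\dim N_j$ is well defined. I would arrange the construction so that $\dim N_j\le \dim N_{j-1}-1$ whenever $N_{j-1}\ne\{\overline0\}$.

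The case $k=0$ is immediate: the collection is $\{M_0\}=\{M\}$ and $\dim N_0=r=r-0$. For $k>0$ (so $r\ge1$ and $|M|>1$), I would run the following step for $j=1,\ldots,k$. If $N_{j-1}\ne\{\overline0\}$, choose a non-zero $\bar x_j\in N_{j-1}\subseteq M$ and apply Lemma~\ref{l:without} to the code $M$ and the vector $\bar x_j$; this yields a linear rank-distance-$d$ code $M_j$ with $|M_j|=|M|$ and $\bar x_j\notin M_j$. Then $\bar x_j\in N_{j-1}\setminus M_j$, so $N_j=N_{j-1}\cap M_j\subsetneq N_{j-1}$ and hence $\dim N_j\le\dim N_{j-1}-1$. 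If instead $N_{j-1}=\{\overline0\}$, I would set $M_j:=M_0$, so that $N_j=\{\overline0\}$ as well. For the dimension bound I would argue that as long as the partial intersections are non-trivial each step strictly lowers the dimension, so if all of steps $1,\ldots,k$ are of this type then $\dim N_k\le r-k$; and if at some point $N_{j-1}=\{\overline0\}$, then $N_k=\{\overline0\}$ and $\dim N_k=0\le r-k$ (using $k\le r$). Either way $\dim(M_0\cap\cdots\cap M_k)=\dim N_k\le r-k$, and every $M_j$ is linear with the same cardinality and rank distance as $M$.

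For the ``moreover'' clause I would observe that the separating vectors handle the bookkeeping automatically. Step $1$ is non-trivial (as $\dim N_0=r\ge1$), so $\bar x_1\in M_0\setminus M_1$ and hence $M_1\ne M_0$. For $i\ge2$, if step $i$ was non-trivial then $\bar x_i\in N_{i-1}\subseteq M_1$ (because $M_1$ is one of the codes intersected to form $N_{i-1}$, since $i-1\ge1$), while $\bar x_i\notin M_i$; thus $\bar x_i\in M_1\setminus M_i$ and $M_i\ne M_1$. If step $i$ was trivial, then by construction $M_i=M_0\ne M_1$. Therefore $M_1\notin\{M_0,M_2,\ldots,M_k\}$.

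I do not expect a serious technical obstacle here; the single delicate point, which I would present most carefully, is the ``moreover'' clause. It is essential that the trivial-step codes be set equal to $M_0$ (not to an arbitrary code of the right parameters) and, more importantly, that the vector $\bar x_i$ avoided at step $i\ge2$ automatically lies in $M_1$ because $M_1$ is a factor of $N_{i-1}$. This is exactly what forces $M_1$ to differ from all later codes with no extra effort, so the whole argument reduces to repeated, essentially free, applications of Lemma~\ref{l:without}.
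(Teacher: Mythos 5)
Your proposal is correct and follows essentially the same route as the paper's own proof: a step-by-step (inductive) construction that, while the running intersection is non-trivial, picks a nonzero vector in it and applies Lemma~\ref{l:without} to produce a code of the same parameters avoiding that vector, and sets $M_j=M_0$ once the intersection has become trivial. Your treatment of the ``moreover'' clause (observing that for $i\ge 2$ the avoided vector $\bar x_i$ lies in $N_{i-1}\subseteq M_1$ but not in $M_i$, and that trivial steps return $M_0\ne M_1$) is in fact slightly more explicit than the paper's brief parenthetical remark, but the underlying argument is identical.
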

\begin{proof}
We proceed by induction in $k$.
If $k=0$, the claim is trivial.
If $k> 0$, then from the inductive
hypothesis we have a collection
 $\{M=M_0,M_1,\ldots,M_{k-1}\}$
 such that
 the dimension of
 $U:=M_0\cap\ldots\cap M_{k-1}$ is at most
 $r-k+1$. If this dimension is $0$,
 we take $M_k=M$.
 Otherwise, we choose a nonzero matrix
 $\bar x$ in $U$ and take $M_k=D$,
 where $D$ is from
 Lemma~\ref{l:without}
 (since $\bar x$ belongs to all $M_i$, $i\in\{0,\ldots,k-1\}$
 but not to $M_k$, all $k$ codes are distinct, which provides the last claim of the lemma).
\end{proof}

\begin{theorem}[aperiodic MRD codes]
 If $1<d\le n \le m/2$ and $q$ is a prime power,
 then there are aperiodic
 distance-$d$ MRD codes in $B_q(m,n)$.
\end{theorem}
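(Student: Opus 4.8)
The plan is to start from a linear product MRD code and switch its slices so as to kill every nonzero period, using the collection of subcodes furnished by Lemma~\ref{l:cup}. Since $1<d\le n\le m/2$, I would set $m'=n$ and $m''=m-n\ge n$ and take $C=C'\times C''$ (Proposition~\ref{p:product}), where $C'$ and $C''$ are linear Gabidulin MRD codes of distance $d$ in $B_q(n,n)$ and $B_q(m'',n)$. Writing each matrix of $B_q(m,n)$ as a pair $(U,V)$ with $U$ the top $n$ rows and $V$ the bottom $m''$ rows, the code $C$ is the disjoint union over $V\in C''$ of the slices $\{(U,V):U\in M\}$, where $M:=C'$. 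By Theorem~\ref{th:swiMRD}, within the slice determined by a fixed $V$ I may replace the top-code $M$ by an arbitrary MRD code $R_V\subseteq B_q(n,n)$, independently for each $V$, and the resulting code $C'$ is again MRD of distance $d$. A direct computation (projecting onto the bottom rows) shows that a vector $y=(u,v)$ is a period of $C'$ exactly when $v\in\mathrm{ker}(C'')=C''$ and $R_{V+v}=R_V+u$ for every $V\in C''$; so everything reduces to choosing the family $\{R_V\}$ with no nontrivial such $(u,v)$.

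For the choice of codes I would invoke Lemma~\ref{l:cup} with $k=r:=\dim M=n(n-d+1)$: it yields linear distance-$d$ codes $M_0=M,M_1,\dots,M_r$ in $B_q(n,n)$, all of the same cardinality, with $\dim(M_0\cap\cdots\cap M_r)\le 0$, hence $M_0\cap\cdots\cap M_r=\{0\}$, and with the crucial extra property $M_1\notin\{M_0,M_2,\dots,M_r\}$. I would pick $r+1$ distinct elements $V_0,\dots,V_r\in C''$ (possible because $|C''|=q^{(m-n)(n-d+1)}\ge 2^{n(n-d+1)}\ge r+1$), set $R_{V_i}:=M_i$, and set $R_V:=M_0$ for all other $V$. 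Because $M_1$ differs from $M_0$ and from every $M_i$ with $i\ne1$, the code $M_1$ is assigned to the single slice $V_1$ and to no other.

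It then remains to verify $\mathrm{ker}(C')=\{0\}$. Let $(u,v)$ be a period. Each $R_V$ is one of the linear subspaces $M_i$ and therefore contains the zero matrix; hence in the identity $R_{V+v}=R_V+u$ the left side contains $0$, forcing $u\in R_V$, and this holds for every $V$. Consequently $u\in\bigcap_{V}R_V=M_0\cap\cdots\cap M_r=\{0\}$, so $u=0$ and the period relation becomes $R_{V+v}=R_V$ for all $V$. Evaluating at $V=V_1$ gives $R_{V_1+v}=M_1$; by the uniqueness of the slice carrying $M_1$ we get $V_1+v=V_1$, i.e. $v=0$. Thus $C'$ has trivial kernel and is the desired aperiodic distance-$d$ MRD code.

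The main obstacle is ruling out periods with $v\ne0$, i.e. translation symmetries permuting the slices; the two ingredients that overcome it are, first, the observation that every $R_V$ contains $0$, which collapses the top component $u$ to $0$ before any symmetry analysis, and second, the distinctness clause of Lemma~\ref{l:cup}, which lets me place $M_1$ in exactly one slice so that no nonzero translation of $C''$ can preserve the assignment. I would only need to double-check the routine size bound $|C''|\ge r+1$ guaranteeing enough distinct slices, which follows from $m-n\ge n$ and $2^{x}\ge x+1$.
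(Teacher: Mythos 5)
Your proposal is correct and follows essentially the same route as the paper: the same product/coset decomposition with $M$ in $B_q(n,n)$, the same switching via Theorem~\ref{th:swiMRD}, and the same use of Lemma~\ref{l:cup} with $k=r=\dim(M)$, assigning $M_1,\dots,M_r$ to distinct slices and $M_0$ elsewhere. The only (cosmetic) difference is the organization of the final verification: you first kill the top component $u$ for \emph{any} period using the trivial intersection $\bigcap_i M_i=\{0\}$ and then kill $v$ via the uniqueness of the slice carrying $M_1$, whereas the paper splits into the two cases $v=0$ and $v\neq 0$ and uses exactly the same two properties of Lemma~\ref{l:cup} in the respective cases.
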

\begin{proof}
 There is an $\F_q$-linear distance-$d$ MRD code
 $C$ in $B_q(m,n)$ that is representable as the union of
 cosets of a
 $\F_q$-linear distance-$d$ MRD code
 $M$ in $B_q(n,n)$ (regarded as a subgraph of $B_q(m,n)$):
 \begin{equation}\label{eq:MMMM_}
  C = \bigcup_{x\in P} (M+x),
 \end{equation}
 where $P$ is a set of representatives,
 $|P|=q^{(m-n)(n-d+1)}> n(n-d-1)$,
 $\overline 0 \in P$.
 By Theorem~\ref{th:swiMRD},
 for every $x$ in $P$, the corresponding
 instance of $M$ in~\eqref{eq:MMMM_}
 can be replaced by another
 MRD code $M^{(x)}$ in $B_q(n,n)$ with the same code parameters as $M$
 \begin{equation}\label{eq:MMMM_y}
  C' = \bigcup_{x\in P} M^{(x)}.
 \end{equation}
 We choose distinct $x_0$, \ldots, $x_r$ in $P$, where $r=n(n-d-1)=\dim(M)$,
 and set $M^{(x_i)}:=M_i$, $i=0,\ldots,r$, where
 $M_i$ are from Lemma~\ref{l:cup} $(k=r)$.
 For the remaining $|P|-r-1$ elements $x$ of $P$,
 we keep $M^{(x_i)}:=M$.

 We now consider an arbitrary non-zero $y$ in $B_q(m,n)$ and
  show that it is not a period of $C'$.
  Seeking a contradiction, assume $y$ is a period.
  In this case, both $x_1$ and $x_1+y$ are in $C'$,
  and we consider two subcases.

  (i) If $x_1+y \in x_1+M_1$, then we also have
  $x_i+y \in x_i+M_i$, $i=0,\ldots,r$.
  This implies $y=\bar 0$ because
  $M_0\cap\ldots\cap M_d=\{\bar 0\}$ by Lemma~\ref{l:cup}.

  (ii) If $x_1+y \in x+M^{(x)}$ for some $x$
  from $P\backslash\{x_1\}$, then
  $x_1+M_1 + y = x+M^{(x)}$ and hence $M_1 = M^{(x)}$,
  which contradicts the last claim of  Lemma~\ref{l:cup}.
\end{proof}

\section{Conclusion}\label{s:concl}

We have established a lower bound
$2^{q^{(n-d+1-o(n))m}}$ on the number of MRD codes
of distance $d$ in $B_q(m,n)$
as $m$ grows.
A trivial upper bound on the number of MRD codes is
$$(q^{(d-1)m})^{q^{(n-d+1)m}}
 = 2^{m\cdot q^{(n-d+1)m}\cdot (1+o(1)) }$$
(indeed, the space is partitioned into $q^{(n-d+1)m}$
anticodes of cardinality $q^{(d-1)m}$, each containing
exactly one codeword).
Hence, the logarithm of this number is in
$O(m\cdot q^{(n-d+1)m})$, $m\to\infty$,
while our lower bound on this logarithm
is only $\mathrm{const}\cdot q^{\mathrm(n-d+1)m}$.
With the switching approach,
further improvement of the lower bound
is possible if one can switch ``non-parallel'' MRD subcodes
of the initial MRD code. This strategy worked well
for evaluating the number of $Q$-ary non-linear MDS codes
(equivalently, systems of
$k$ strongly orthogonal latin $m$-cubes of order
$Q$, where $k$ and $m$ are constants and $Q$ grows)
of fixed length and distance and growing alphabet size $Q$
(an analog of our $q^m$), see~\cite{Potapov:SQS-MDS}.
To apply a similar switching strategy to constructing MRD codes,
we need to start with an MRD
code with many different $\GF{q}$-linear MRD subcodes,
corresponding to different, ``non-parallel'' $B_q(m',n)$-subgraphs.
Developing this idea is one of perspective directions
for the further study.

Another point we want to mention is that 
the subcode switching for MRD codes has some restrictions: 
with this method, we could only construct new MRD codes 
with parameters satisfying $m\ge 2n$.
This is enough to obtain powerful asymptotic results, but do not allow to construct
new codes with $n\le m<2n$.
This restriction can be tried to bypass with other variations of switching technique.
The most common way to construct codes by switching in some direction,
see e.g.~\cite{Ost:2012:switching},
can also be applied to MRD codes: a code $C'$ is obtained from a code $C$
by switching in direction $v$, where $v$ is a vector from the ambient vector space,
if $C'$ is a subset of $C \cup (C+v)$. For given $C$ and $v$,
finding all such $C'$ is a simple computational problem, which is good for
experiments with small-parameter codes. However, developing this technique theoretically
for arbitrary parameters remains an open challenging problem for MRD codes. 
One of the main questions here is the following: 
given a starting linear code $C$ and a vector $v$ 
(often, of weight $1$, although this is not a necessary restriction),
what is the minimal subspace $S$ of $C$ such that $C \cup (S+v) \setminus S$
has the same parameters as $C$?
More generally, one can ask what is 
the minimum difference between two MRD codes
of the same parameters. This question is common for many classes of optimal codes,
and one of the developed methods to make lower bounds for this minimum is to study
eigenspaces of the ambient graph, see~\cite{SotVal:survey2021} for a survey for different graphs
and \cite{Sotnikova:bil} for the bilinear form graphs.



\providecommand\href[2]{#2} \providecommand\url[1]{\href{#1}{#1}}
  \def\DOI#1{{\small {DOI}:
  \href{http://dx.doi.org/#1}{#1}}}\def\DOIURL#1#2{{\small{DOI}:
  \href{http://dx.doi.org/#2}{#1}}}

\end{document}